\documentclass[submission,copyright,creativecommons]{eptcs}
\usepackage{underscore}           

\title{Deriving Dagger Compactness}

\author{Sean Tull\thanks{This research was supported by an EPSRC Doctoral Prize.}
\institute{University of Oxford}
\email{sean.tull@cs.ox.ac.uk}
}

\usepackage{amsmath,amssymb,stmaryrd,xspace,enumerate}
\usepackage{amsthm}
\usepackage{tikz-cd}
\usepackage{multirow}
\usepackage{mathtools}
\usepackage{url}
\usepackage{relsize}
\usepackage{centernot}
\usepackage{bm} 
\usepackage[inline]{enumitem}
\setenumerate{label=\arabic*.}

\usepackage{tikzit}


\newcommand{\dsubcat}{\catD}
\newcommand{\paxiom}[1]{\textbf{(#1)}}

\theoremstyle{plain}

\newtheorem{theorem}{Theorem}
\newtheorem{proposition}[theorem]{Proposition}

\newtheorem{lemma}[theorem]{Lemma}

\theoremstyle{definition}
\newtheorem{definition}[theorem]{Definition}

\newtheorem{examples}[theorem]{Examples}
\newtheorem{remark}[theorem]{Remark}

\newcommand{\peassign}[1]{\overline{#1}} 


\newcommand{\notetoself}[1]{}
\newcommand{\omitfornow}[1]{}
\newcommand{\omitthis}[1]{}



\newcommand{\Rel}{\cat{Rel}}

\newcommand{\Class}{\cat{FClass}}

\newcommand{\FCStar}{\cat{FCStar}}

\newcommand{\Quant}[1]{\cat{Quant}_{#1}}

\newcommand{\Mat}{\cat{Mat}} 
\newcommand{\MatS}{\Mat_S}


%



\newcommand{\CPM}{\ensuremath{\mathsf{CPM}}\xspace}
\newcommand{\Split}[1]{\ensuremath{\mathrm{Split}}\xspace}

\newcommand{\cat}[1]{\ensuremath{\mathbf{#1}}\xspace}

\newcommand{\catD}{\cat{D}}
\newcommand{\catA}{\cat{A}}
\newcommand{\catC}{\cat{C}}
\newcommand{\catB}{\cat{B}}



 







\newcommand{\pure}{\mathsf{pure}}


\newcommand{\id}[1]{\ensuremath{\mathrm{id}_{#1}}}





\newcommand{\discard}[1]{\ensuremath{\tinygroundnew_{#1}}}

\newcommand{\discardflip}[1]{\ensuremath{\tinygroundflipnew_{#1}}}

\DeclareFontFamily{U}{mathux}{\hyphenchar\font45}
\DeclareFontShape{U}{mathux}{m}{n}{
      <5> <6> <7> <8> <9> <10>
      <10.95> <12> <14.4> <17.28> <20.74> <24.88>
      mathux10
      }{}
\DeclareSymbolFont{mathux}{U}{mathux}{m}{n}

\DeclareMathSymbol{\bigovee}{1}{mathux}{"8F}

\DeclareMathSymbol{\bigperp}{1}{mathux}{"4E}

\DeclareFontFamily{U}{mathb}{\hyphenchar\font45}
\DeclareFontShape{U}{mathb}{m}{n}{
      <5> <6> <7> <8> <9> <10> gen * mathb
      <10.95> mathb10 <12> <14.4> <17.28> <20.74> <24.88> mathb12
      }{}
\DeclareSymbolFont{mathb}{U}{mathb}{m}{n}
\DeclareFontSubstitution{U}{mathb}{m}{n}
\DeclareMathSymbol{\mylgroup}{\mathbin}{mathb}{'160}
\DeclareMathSymbol{\myrgroup}{\mathbin}{mathb}{'161}






\DeclareMathOperator{\Tr}{Tr}



\newcommand{\isomto}{\xrightarrow{\sim}}




\newcommand{\img}{\mathrm{im}}

\newcommand{\coker}{\mathrm{coker}}











\newcommand{\plusI}[1]{\mathsf{GP}({#1})}








 
 

\newcommand{\hilbH}{\mathcal{H}} 




\usepackage{tikz,xypic}

\usetikzlibrary{decorations.pathreplacing,decorations.markings,arrows.meta,backgrounds}
\pgfdeclarelayer{edgelayer}
\pgfdeclarelayer{nodelayer}
\pgfsetlayers{background,edgelayer,nodelayer,main}

\tikzstyle{cdot}=[circle, draw=black, fill=black!25, inner sep=.4ex] 
\tikzstyle{bigdot}=[dot, inner sep=0pt]
\tikzstyle{whitedot}=[circle, draw=black, fill=white, inner sep=.4ex]
\tikzstyle{greydot}=[circle, draw=black, fill=black!25, inner sep=.4ex] 
\tikzstyle{blackdot}=[circle, draw=black, fill=black, inner sep=.4ex]
\tikzset{arrow/.style={decoration={
    markings,
    mark=at position #1 with \arrow{>[length=2pt, width=3pt]}},
    postaction=decorate},
    reverse arrow/.style={decoration={
    markings,
    mark=at position #1 with {{\arrow{<[length=2pt, width=3pt]}}}},
    postaction=decorate}
}



\newcommand{\tinycup}{\smash{\raisebox{2pt}{\hspace{-2pt}\ensuremath{\begin{pic}[scale=0.2]
   \pgftransformscale{1.5} \draw[arrow=.6, scale = 1] (0,0) to[out=-90,in=-90,looseness=1.5] (1.5,0);
\end{pic}}}}}

\newcommand{\tinycap}{\smash{\raisebox{-3pt}{\hspace{-2pt}\ensuremath{\begin{pic}[scale=0.2, yscale=-1]
   \pgftransformscale{1.5} \draw[arrow=.6, scale = 1] (0,0) to[out=-90,in=-90,looseness=1.5] (1.5,0);
\end{pic}}}}}

\newcommand{\tinyswap}{
\smash{\raisebox{0pt}{\hspace{-2pt}\ensuremath{
   \begin{pic}[scale=.25]
  \draw (0,-.5) to[out=80,in=-100] (1,.5);
  \draw (1,-.5) to[out=100,in=-80] (0,.5);
\end{pic}}}}
}
\newenvironment{pic}[1][] {\begin{aligned}\begin{tikzpicture}[scale=2.0, font=\tiny,#1]}{\end{tikzpicture}\end{aligned}} 



\newif\ifvflip\pgfkeys{/tikz/vflip/.is if=vflip}
\newif\ifhflip\pgfkeys{/tikz/hflip/.is if=hflip}
\newif\ifhvflip\pgfkeys{/tikz/hvflip/.is if=hvflip}

\newenvironment{picc}[1][]
{\begin{aligned}\begin{tikzpicture}[font=\tiny,#1]}
{\end{tikzpicture}\end{aligned}}


\newlength\minimummorphismwidth
\setlength\minimummorphismwidth{0.3cm}
\newlength\stateheight
\setlength\stateheight{0.6cm}
\newlength\minimumstatewidth
\setlength\minimumstatewidth{0.89cm}
\newlength\connectheight
\setlength\connectheight{0.5cm}
\tikzset{colour/.initial=white}

\tikzstyle{pure}=[line width=.7pt]

\makeatletter

\pgfdeclareshape{groundd}
{
    \savedanchor\centerpoint
    {
        \pgf@x=0pt
        \pgf@y=0pt
    }
    \anchor{center}{\centerpoint}
    \anchorborder{\centerpoint}

    \anchor{north}
    {
        \pgf@x=0pt
        \pgf@y=0.16\stateheight
    }
    \anchor{south}
    {
        \pgf@x=0pt
        \pgf@y=0pt
    }
    \saveddimen\overallwidth
    {
        \pgfkeysgetvalue{/pgf/minimum width}{\minwidth}
        \pgf@x=\minimumstatewidth
        \ifdim\pgf@x<\minwidth
            \pgf@x=\minwidth
        \fi
    }
    \backgroundpath
    {
        \begin{pgfonlayer}{main} 
        \pgfsetstrokecolor{black}
        \pgfsetlinewidth{1.25pt}
        \ifhflip
            \pgftransformyscale{-1}
        \fi
        \pgftransformscale{0.5}
        \pgfpathmoveto{\pgfpoint{-0.5*\overallwidth}{0pt}}
        \pgfpathlineto{\pgfpoint{0.5*\overallwidth}{0pt}}
        \pgfpathmoveto{\pgfpoint{-0.33*\overallwidth}{0.33*\stateheight}}
        \pgfpathlineto{\pgfpoint{0.33*\overallwidth}{0.33*\stateheight}}
        \pgfpathmoveto{\pgfpoint{-0.16*\overallwidth}{0.66*\stateheight}}
        \pgfpathlineto{\pgfpoint{0.16*\overallwidth}{0.66*\stateheight}}
        \pgfpathmoveto{\pgfpoint{-0.02*\overallwidth}{\stateheight}}
        \pgfpathlineto{\pgfpoint{0.02*\overallwidth}{\stateheight}}
        \pgfusepath{stroke}
        \end{pgfonlayer}
    }
}




%
%
\usepackage{tikz,xypic}
\usetikzlibrary{decorations.pathreplacing,decorations.markings,arrows.meta,backgrounds,shapes}
\usetikzlibrary{circuits.ee.IEC}
\pgfdeclarelayer{edgelayer}
\pgfdeclarelayer{nodelayer}
\pgfsetlayers{background,edgelayer,nodelayer,main}
\tikzstyle{none}=[inner sep=0mm]
\tikzstyle{every loop}=[]
\tikzstyle{mark coordinate}=[inner sep=0pt,outer sep=0pt,minimum size=3pt,fill=black,circle]

\tikzset{arrow/.style={decoration={
    markings,
    mark=at position #1 with \arrow{>[length=2pt, width=3pt]}},
    postaction=decorate},
    reverse arrow/.style={decoration={
    markings,
    mark=at position #1 with {{\arrow{<[length=2pt, width=3pt]}}}},
    postaction=decorate}
}

\tikzstyle{upground}=[circuit ee IEC,thick,ground,rotate=90,scale=1.5]
\tikzstyle{downground}=[circuit ee IEC,thick,ground,rotate=-90,scale=1.5]

\newcommand{\mapminh}{5mm} 
\newcommand{\stateminh}{5mm}
\newcommand{\maplw}{0.7pt} 
\newcommand{\stateshift}{-0.2pt}
\newcommand{\effectshift}{-0.2pt}

\tikzstyle{box}=[map]
\tikzstyle{medium box}=[medium map]
\tikzstyle{dot}=[inner sep=0mm,minimum width=2mm,minimum height=2mm,draw,shape=circle]  
\tikzstyle{black dot}=[dot,fill=black]
\tikzstyle{white dot}=[dot,fill=white,,text depth=-0.2mm]
\tikzstyle{grey dot}=[dot,fill=black!25] 

\tikzstyle{corner1}=[box,fill=white, font=\footnotesize] %
\tikzstyle{corner2}=[dot,fill=white, font=\footnotesize] %
\tikzstyle{corner3}=[dot,fill=black!25, font=\footnotesize] %
\tikzstyle{corner4}=[dot,fill=black, font=\footnotesize] %


\tikzstyle{scalar}=[circle,draw,inner sep=2pt, line width=\maplw] 


\usetikzlibrary{shapes.misc, positioning}

\tikzset{stateshape/.style={append after command={
   \pgfextra
        \draw[sharp corners, fill=white, line width = \maplw]%
    (\tikzlastnode.west)%
    [rounded corners=0pt] |- (\tikzlastnode.north)%
    [rounded corners=0pt] -| (\tikzlastnode.east)%
    [rounded corners=5pt] |- (\tikzlastnode.south)%
    [rounded corners=5pt] -| (\tikzlastnode.west);
   \endpgfextra}}}

\tikzset{effectshape/.style={append after command={
   \pgfextra
        \draw[sharp corners, fill=white, line width = \maplw]%
    (\tikzlastnode.west)%
    [rounded corners=0pt] |- (\tikzlastnode.south)%
    [rounded corners=0pt] -| (\tikzlastnode.east)%
    [rounded corners=5pt] |- (\tikzlastnode.north)%
    [rounded corners=5pt] -| (\tikzlastnode.west);
   \endpgfextra}}}

 \tikzstyle{map}=[draw,shape=rectangle, inner sep=2pt,minimum height=\mapminh, minimum width=7mm,fill=white]

\tikzstyle{point}=[stateshape,inner sep=2pt, minimum width=6mm, minimum height=\stateminh, yshift=\stateshift]
\tikzstyle{copoint}=[effectshape,inner sep=.2pt, minimum width=6mm, minimum height=\stateminh, yshift=-\effectshift]

\tikzstyle{wide point}=[point, minimum width=12mm]
\tikzstyle{wide copoint}=[copoint, minimum width=12mm]

 \tikzstyle{map}=[draw,shape=rectangle, inner sep=2pt,minimum height=\mapminh, minimum width=7mm,fill=white, line width = \maplw]

\tikzstyle{medium map} = [map, minimum width = 12mm] 
\tikzstyle{semilarge map} = [map, minimum width = 15mm] 
\tikzstyle{large map} = [map, minimum width = 18mm] 


\tikzstyle{kpoint} =[point]
\tikzstyle{kpointadj} =[copoint]
\tikzstyle{kpointconj}=[dagpointconj] 




\makeatletter
\newcommand{\boxshape}[3]{%
\pgfdeclareshape{#1}{
\inheritsavedanchors[from=rectangle] 
\inheritanchorborder[from=rectangle]
\inheritanchor[from=rectangle]{center}
\inheritanchor[from=rectangle]{north}
\inheritanchor[from=rectangle]{south}
\inheritanchor[from=rectangle]{west}
\inheritanchor[from=rectangle]{east}
\backgroundpath{
\southwest \pgf@xa=\pgf@x \pgf@ya=\pgf@y
\northeast \pgf@xb=\pgf@x \pgf@yb=\pgf@y

\@tempdima=#2
\@tempdimb=#3

\pgfpathmoveto{\pgfpoint{\pgf@xa - 5pt + \@tempdima}{\pgf@ya}}
\pgfpathlineto{\pgfpoint{\pgf@xa - 5pt - \@tempdima}{\pgf@yb}}
\pgfpathlineto{\pgfpoint{\pgf@xb + 5pt + \@tempdimb}{\pgf@yb}}
\pgfpathlineto{\pgfpoint{\pgf@xb + 5pt - \@tempdimb}{\pgf@ya}}
\pgfpathlineto{\pgfpoint{\pgf@xa - 5pt + \@tempdima}{\pgf@ya}}
\pgfpathclose
}
}}

\boxshape{NEbox}{0pt}{3pt} 
\boxshape{SEbox}{0pt}{-3pt}
\boxshape{NWbox}{3pt}{0pt}
\boxshape{SWbox}{-3pt}{0pt}
\boxshape{rec-box}{0pt}{0pt}
\makeatother

\tikzstyle{cloud}=[shape=cloud,draw,minimum width=1.5cm,minimum height=1.5cm]






\tikzstyle{dagmap}=[draw,shape=NEbox,inner sep=2pt,minimum height=\mapminh,fill=white, line width = \maplw] %
\tikzstyle{dashedmap}=[draw,dashed,shape=NEbox,inner sep=2pt,minimum height=\mapminh,fill=white, line width = \maplw]
\tikzstyle{mapdag}=[draw,shape=SEbox,inner sep=2pt,minimum height=\mapminh,fill=white, line width = \maplw]
\tikzstyle{mapadj}=[draw,shape=SEbox,inner sep=2pt,minimum height=\mapminh,fill=white, line width = \maplw]
\tikzstyle{maptrans}=[draw,shape=SWbox,inner sep=2pt,minimum height=\mapminh,fill=white, line width = \maplw]
\tikzstyle{mapconj}=[draw,shape=NWbox,inner sep=2pt,minimum height=\mapminh,fill=white, line width = \maplw]

\tikzstyle{medium dagmap}=[draw,shape=NEbox,inner sep=2pt,minimum height=\mapminh,fill=white,minimum width=7mm, line width = \maplw]
\tikzstyle{semilarge dagmap}=[draw,shape=NEbox,inner sep=2pt,minimum height=\mapminh,fill=white,minimum width=9.5mm, line width = \maplw]
\tikzstyle{large dagmap}=[draw,shape=NEbox,inner sep=2pt,minimum height=\mapminh,fill=white,minimum width=12mm, line width = \maplw]

\makeatletter

\pgfdeclareshape{cornerpoint}{
\inheritsavedanchors[from=rectangle] 
\inheritanchorborder[from=rectangle]
\inheritanchor[from=rectangle]{center}
\inheritanchor[from=rectangle]{north}
\inheritanchor[from=rectangle]{south}
\inheritanchor[from=rectangle]{west}
\inheritanchor[from=rectangle]{east}
\backgroundpath{
\southwest \pgf@xa=\pgf@x \pgf@ya=\pgf@y
\northeast \pgf@xb=\pgf@x \pgf@yb=\pgf@y

\pgfmathsetmacro{\pgf@shorten@left}{\pgfkeysvalueof{/tikz/shorten left}}
\pgfmathsetmacro{\pgf@shorten@right}{\pgfkeysvalueof{/tikz/shorten right}}

\pgfpathmoveto{\pgfpoint{0.5 * (\pgf@xa + \pgf@xb)}{\pgf@ya - 5pt}}
\pgfpathlineto{\pgfpoint{\pgf@xa - 8pt + \pgf@shorten@left}{\pgf@yb - 1.5 * \pgf@shorten@left}}
\pgfpathlineto{\pgfpoint{\pgf@xa - 8pt + \pgf@shorten@left}{\pgf@yb}}
\pgfpathlineto{\pgfpoint{\pgf@xb + 8pt - \pgf@shorten@right}{\pgf@yb}}
\pgfpathlineto{\pgfpoint{\pgf@xb + 8pt - \pgf@shorten@right}{\pgf@yb - 1.5 * \pgf@shorten@right}}
\pgfpathclose
}
}

\pgfdeclareshape{cornercopoint}{
\inheritsavedanchors[from=rectangle] 
\inheritanchorborder[from=rectangle]
\inheritanchor[from=rectangle]{center}
\inheritanchor[from=rectangle]{north}
\inheritanchor[from=rectangle]{south}
\inheritanchor[from=rectangle]{west}
\inheritanchor[from=rectangle]{east}
\backgroundpath{
\southwest \pgf@xa=\pgf@x \pgf@ya=\pgf@y
\northeast \pgf@xb=\pgf@x \pgf@yb=\pgf@y

\pgfmathsetmacro{\pgf@shorten@left}{\pgfkeysvalueof{/tikz/shorten left}}
\pgfmathsetmacro{\pgf@shorten@right}{\pgfkeysvalueof{/tikz/shorten right}}

\pgfpathmoveto{\pgfpoint{0.5 * (\pgf@xa + \pgf@xb)}{\pgf@yb + 5pt}}
\pgfpathlineto{\pgfpoint{\pgf@xa - 8pt + \pgf@shorten@left}{\pgf@ya + 1.5 * \pgf@shorten@left}}
\pgfpathlineto{\pgfpoint{\pgf@xa - 8pt + \pgf@shorten@left}{\pgf@ya}}
\pgfpathlineto{\pgfpoint{\pgf@xb + 8pt - \pgf@shorten@right}{\pgf@ya}}
\pgfpathlineto{\pgfpoint{\pgf@xb + 8pt - \pgf@shorten@right}{\pgf@ya + 1.5 * \pgf@shorten@right}}
\pgfpathclose
}
}

\makeatother

\pgfkeyssetvalue{/tikz/shorten left}{0pt}
\pgfkeyssetvalue{/tikz/shorten right}{0pt}


\tikzstyle{dagpoint common}=[draw,fill=white,inner sep=1pt, line width = \maplw, minimum height = 4mm, yshift=1.2pt] 
\tikzstyle{dagpoint sc}=[shape=cornerpoint,dagpoint common]
\tikzstyle{dagpoint adjoint sc}=[shape=cornercopoint,dagpoint common]
\tikzstyle{dagpoint}=[shape=cornerpoint,shorten left=4pt,dagpoint common]
\tikzstyle{dagpointadj}=[shape=cornercopoint,shorten left=5pt,dagpoint common]
\tikzstyle{dagpointconj}=[shape=cornerpoint,shorten right=5pt,dagpoint common]
\tikzstyle{dagpointtrans}=[shape=cornercopoint,shorten right=5pt,dagpoint common]
\tikzstyle{dagpointsymm}=[shape=cornerpoint,shorten left=5pt,shorten right=5pt,dagpoint common]

\tikzstyle{widedagpoint}=[dagpoint, minimum width=1 cm, inner sep=2pt]
\tikzstyle{widedagpointadj}=[dagpointadj, minimum width=1 cm, inner sep=2pt]

\tikzstyle{every picture}=[baseline=-0.25em,scale=0.5]
\tikzstyle{label}=[font=\footnotesize,text height=1ex, text depth=0.15ex]
\tikzstyle{math}=[text height=1ex, text depth=0.15ex]



\usetikzlibrary[shapes]

\tikzset{
sidetriangle/.style = {regular polygon, regular polygon sides = 3, aspect = 1, shape border rotate = 90, draw, inner sep = 0, minimum width = 1.2cm}
}

\tikzset{
isoc/.style = {shape=isosceles triangle, shape border rotate = 180, isosceles triangle stretches = true, minimum width = 1.2cm, minimum height= 1.5cm, inner sep = 0.3}}

\tikzset{
coarse/.style = {shape = circle, fill = white, draw, inner sep = 0, minimum width =0.125cm}
}
\tikzset{
coarsesymbol/.style = {shape = circle, fill = white, inner sep = -0.7, minimum width = 0.125cm}
}

\tikzstyle{sidetriangle2}=[sidetriangle, minimum width = 2cm, fill=white]
\tikzstyle{sideisocsmall}]=[style=isoc, minimum width = 1cm, minimum height = 0.8cm, draw, fill=white, font=\Large]
\tikzstyle{sideisoc}]=[style=isoc, minimum width = 2cm, draw, fill=white, font=\Large]
\tikzstyle{sideisocmid}]=[style=isoc, minimum width = 2.5cm, draw, fill=white, font=\Large]
\tikzstyle{sideisocmedium}]=[style=isoc, minimum width = 3cm, draw, fill=white, font=\Large]

\newcommand{\tinygroundnew}{
\smash{
{\hspace{-3pt}
\ensuremath{
\begin{picc}[scale=1.0] 
    \node[upground, xscale=0.8, yscale=0.7] (1) at (0,0.16) {};
    \draw (0,0.03) to (0,-0.25);
\end{picc}
}\hspace{-1pt}}}}

\newcommand{\tinygroundflipnew}{
\smash{
{\hspace{-3pt}
\ensuremath{
\begin{picc}[yscale=-1.0] 
    \node[upground, xscale=0.8, yscale=-0.7] (1) at (0,0.10) {};
    \draw (0,-0.03) to (0,-0.31);
\end{picc}
}\hspace{-1pt}}}}


\boxshape{NEbox}{0pt}{3pt} 
\boxshape{SEbox}{0pt}{-3pt}
\boxshape{NWbox}{3pt}{0pt}
\boxshape{SWbox}{-3pt}{0pt}
\boxshape{rec-box}{0pt}{0pt}
\makeatother

\tikzstyle{cloud}=[shape=cloud,draw,minimum width=1.5cm,minimum height=1.5cm]






\tikzstyle{dagmap}=[draw,shape=NEbox,inner sep=2pt,minimum height=\mapminh,fill=white, line width = \maplw] %
\tikzstyle{dashedmap}=[draw,dashed,shape=NEbox,inner sep=2pt,minimum height=\mapminh,fill=white, line width = \maplw]
\tikzstyle{mapdag}=[draw,shape=SEbox,inner sep=2pt,minimum height=\mapminh,fill=white, line width = \maplw]
\tikzstyle{mapadj}=[draw,shape=SEbox,inner sep=2pt,minimum height=\mapminh,fill=white, line width = \maplw]
\tikzstyle{maptrans}=[draw,shape=SWbox,inner sep=2pt,minimum height=\mapminh,fill=white, line width = \maplw]
\tikzstyle{mapconj}=[draw,shape=NWbox,inner sep=2pt,minimum height=\mapminh,fill=white, line width = \maplw]

\tikzstyle{medium dagmap}=[draw,shape=NEbox,inner sep=2pt,minimum height=\mapminh,fill=white,minimum width=7mm, line width = \maplw]
\tikzstyle{semilarge dagmap}=[draw,shape=NEbox,inner sep=2pt,minimum height=\mapminh,fill=white,minimum width=9.5mm, line width = \maplw]
\tikzstyle{medium dagmap adj}=[draw,shape=SEbox,inner sep=2pt,minimum height=\mapminh,fill=white,minimum width=7mm, line width = \maplw]
\tikzstyle{large dagmap}=[draw,shape=NEbox,inner sep=2pt,minimum height=\mapminh,fill=white,minimum width=12mm, line width = \maplw]

\makeatletter

\pgfdeclareshape{cornerpoint}{
\inheritsavedanchors[from=rectangle] 
\inheritanchorborder[from=rectangle]
\inheritanchor[from=rectangle]{center}
\inheritanchor[from=rectangle]{north}
\inheritanchor[from=rectangle]{south}
\inheritanchor[from=rectangle]{west}
\inheritanchor[from=rectangle]{east}
\backgroundpath{
\southwest \pgf@xa=\pgf@x \pgf@ya=\pgf@y
\northeast \pgf@xb=\pgf@x \pgf@yb=\pgf@y

\pgfmathsetmacro{\pgf@shorten@left}{\pgfkeysvalueof{/tikz/shorten left}}
\pgfmathsetmacro{\pgf@shorten@right}{\pgfkeysvalueof{/tikz/shorten right}}

\pgfpathmoveto{\pgfpoint{0.5 * (\pgf@xa + \pgf@xb)}{\pgf@ya - 5pt}}
\pgfpathlineto{\pgfpoint{\pgf@xa - 8pt + \pgf@shorten@left}{\pgf@yb - 1.5 * \pgf@shorten@left}}
\pgfpathlineto{\pgfpoint{\pgf@xa - 8pt + \pgf@shorten@left}{\pgf@yb}}
\pgfpathlineto{\pgfpoint{\pgf@xb + 8pt - \pgf@shorten@right}{\pgf@yb}}
\pgfpathlineto{\pgfpoint{\pgf@xb + 8pt - \pgf@shorten@right}{\pgf@yb - 1.5 * \pgf@shorten@right}}
\pgfpathclose
}
}

\pgfdeclareshape{cornercopoint}{
\inheritsavedanchors[from=rectangle] 
\inheritanchorborder[from=rectangle]
\inheritanchor[from=rectangle]{center}
\inheritanchor[from=rectangle]{north}
\inheritanchor[from=rectangle]{south}
\inheritanchor[from=rectangle]{west}
\inheritanchor[from=rectangle]{east}
\backgroundpath{
\southwest \pgf@xa=\pgf@x \pgf@ya=\pgf@y
\northeast \pgf@xb=\pgf@x \pgf@yb=\pgf@y

\pgfmathsetmacro{\pgf@shorten@left}{\pgfkeysvalueof{/tikz/shorten left}}
\pgfmathsetmacro{\pgf@shorten@right}{\pgfkeysvalueof{/tikz/shorten right}}

\pgfpathmoveto{\pgfpoint{0.5 * (\pgf@xa + \pgf@xb)}{\pgf@yb + 5pt}}
\pgfpathlineto{\pgfpoint{\pgf@xa - 8pt + \pgf@shorten@left}{\pgf@ya + 1.5 * \pgf@shorten@left}}
\pgfpathlineto{\pgfpoint{\pgf@xa - 8pt + \pgf@shorten@left}{\pgf@ya}}
\pgfpathlineto{\pgfpoint{\pgf@xb + 8pt - \pgf@shorten@right}{\pgf@ya}}
\pgfpathlineto{\pgfpoint{\pgf@xb + 8pt - \pgf@shorten@right}{\pgf@ya + 1.5 * \pgf@shorten@right}}
\pgfpathclose
}
}

\makeatother

\pgfkeyssetvalue{/tikz/shorten left}{0pt}
\pgfkeyssetvalue{/tikz/shorten right}{0pt}


\tikzstyle{dagpoint common}=[draw,fill=white,inner sep=1pt, line width = \maplw, minimum height = 4mm, yshift=1.2pt] 
\tikzstyle{dagpoint sc}=[shape=cornerpoint,dagpoint common]
\tikzstyle{dagpoint adjoint sc}=[shape=cornercopoint,dagpoint common]
\tikzstyle{dagpoint}=[shape=cornerpoint,shorten left=4pt,dagpoint common]
\tikzstyle{dagpointadj}=[shape=cornercopoint,shorten left=5pt,dagpoint common]
\tikzstyle{dagpointconj}=[shape=cornerpoint,shorten right=5pt,dagpoint common]
\tikzstyle{dagpointtrans}=[shape=cornercopoint,shorten right=5pt,dagpoint common]
\tikzstyle{dagpointsymm}=[shape=cornerpoint,shorten left=5pt,shorten right=5pt,dagpoint common]

\tikzstyle{widedagpoint}=[dagpoint, minimum width=1 cm, inner sep=2pt]
\tikzstyle{widedagpointadj}=[dagpointadj, minimum width=1 cm, inner sep=2pt]

\newcommand{\emptydiag}{\quad \ \tikzfig{empty-diag}}
\begin{document}

\maketitle

\begin{abstract}
Dagger compact structure is a common assumption in the study of physical process theories, but lacks a clear interpretation. Here we derive dagger compactness from more operational axioms on a category. 
We first characterise the structure in terms of a simple mapping of states to effects which we call a \emph{state dagger}, before deriving this in any category with `completely mixed' states and a form of purification, as in quantum theory. 
\end{abstract}

The processes of a general theory of physics are commonly studied in terms of \emph{symmetric monoidal categories}~\cite{abramskycoecke:categoricalsemantics}, very general mathematical structures which come with a helpful graphical calculus in which processes are represented via boxes and wires~\cite{selinger2011survey}. The composition operations in such categories have a clear operational interpretation as allowing one to run processes both in `sequence' and `parallel', and thus build more general \emph{circuit diagrams} resembling experiments~\cite{coecke2011categories}.

 When studying theories resembling quantum or classical physics, it is common to consider categories with the further structure of being \emph{dagger compact}~\cite{Selinger2007139,selinger2012finite}. Compactness is a property of a category allowing one to 'bend wires' in diagrams, often motivated by its connection to quantum teleportation~\cite{abramskycoecke:categoricalsemantics}. As well as this, the extra structure of a \emph{dagger} allows one to `flip diagrams upside-down' as below:
\begin{equation} \label{eq:intro-pic}
 \scalebox{0.9}{
\tikzfig{intro-pic-full}
}
\end{equation}
 Though diagrammatically and mathematically well-behaved, dagger compactness adds a significant amount of extra structure to a category, and the dagger lacks a clear meaning in terms of processes. As such, several authors have sought to understand the dagger in more operational language~\cite{westerbaan2018dagger,selby2016process,barnum2013symmetry}.
 
 In this article, we explore elementary operational axioms on a category which induce a dagger compact structure. More precisely, we give axioms on a symmetric monoidal category coming with chosen \emph{discarding} $\discard{}$ and \emph{completely mixed states} $\discardflip{}$ which ensure that it is dagger compact. 
 
 Our starting observation is that in a compact category the dagger is determined by its mapping from \emph{states} (processes with no input) to \emph{effects} (with no output). In Section~\ref{sec:state-daggers} we axiomatise when such a mapping induces a dagger compact structure on a category, calling such a map a \emph{state dagger}. In Sections~\ref{sec:extending-daggers} and~\ref{sec:CP-axiom} we show how such mapping may be extended from any suitable subcategory, in passing noting a new characterisation of Selinger's $\CPM$ construction~\cite{Selinger2007139}. 
 
 Inspiration for our approach comes from quantum theory, where the dagger has a clear operational meaning on those states which are \emph{pure}. Thanks to \emph{purification} we may then extend this assignment to arbitrary processes in the theory. Our main result in Section~\ref{sec:daggers-from-purification} applies our results to derive dagger compact structure in any category satisfying a suitable form of purification. In future work we hope to extend of our approach to include classical as well as quantum systems. 
 
  Motivation for our work comes from several recent  \emph{reconstructions} of finite-dimensional quantum theory in terms of elementary categorical axioms, each of which rely on a dagger compact structure~\cite{catreconstruction,selby2018reconstructing}.  In future work we hope to apply these results to remove such assumptions from these reconstructions, in order to provide them with a full operational interpretation. 
  
\paragraph{Related work}
Several authors have attempted to derive daggers previously.
Notably, Bas Westerbaan has derived a dagger on a `pure' subcategory from a series of axioms on an \emph{effectus}~\cite{westerbaan2018dagger}, without restricting to finite dimension or requiring monoidal structure. Selby and Coecke have characterised the dagger on pure quantum processes in terms of a mapping of states to effects (called a \emph{test structure}) within the setting of locally tomographic theories \cite{selby2016process}. Our results differ by treating dagger compactness as one entity and on the whole category, rather than just considering the dagger on pure maps.
 
Elsewhere, Barnum, Duncan and Wilce studied dagger compactness for finite-dimensional convex operational models~\cite{barnum2013symmetry}, and Cunningham and Heunen considered chosen completely mixed states in \cite{cunningham2017purity}.

\section{Setup} \label{sec:setup}

Throughout we will work in the following general setting for describing composable processes~\cite{coecke2011categories}. Recall that a category $\catC$ is \emph{symmetric monoidal} when it comes with a functor $\otimes \colon \catC \times \catC \to \catC$, a distinguished \emph{unit object} $I$ and natural \emph{coherence isomorphisms}
$(A \otimes B) \otimes C \isomto A \otimes (B \otimes C)$, $A \otimes I \isomto A$, and $\sigma \colon A \otimes B \simeq B \otimes A$, 
satisfying some straightforward equations~\cite{coecke2011categories}. Such categories are most easily treated using their \emph{graphical calculus} in which morphisms are depicted as boxes, read from bottom to top, with
\[
\tikzfig{identity-morphism}
\qquad \qquad 
\tikzfig{composition-morphism}
\qquad \qquad
\tikzfig{tensor-morphism}
\]
The (identity on) the unit object $I$ is the empty picture, so that morphisms $I \to A$, called \emph{states}, are depicted with `no input', \emph{effects} $A \to I$ have `no output', and \emph{scalars} $I \to I$ have neither. The swap isomorphism is depicted by `crossing wires' $\tinyswap$. For more on the graphical calculus see~\cite{selinger2011survey}.

The categories we will work also typically come with the following operational features allowing us to throw systems away, as well as prepare them in a `maximally noisy' state.

\begin{definition}
We say that $\catC$ has \emph{discarding} when every object $A$ comes with a chosen effect $\discard{A}$, such that
\[
\tikzfig{discard_axioms}
\]
Dually, $\catC$ has \emph{completely mixed states}~\cite{cunningham2017purity} when every object $A$ has a chosen state $\discardflip{A}$, satisfying 
\[
\tikzfig{discard_axioms_flip}
\]
We call a morphism $f \colon A \to B$ \emph{causal} when $\discard{B} \circ f= \discard{A}$ and \emph{co-causal} when $f \circ \discardflip{A} = \discardflip{B}$.
\end{definition}

\begin{remark}
We will not require completely mixed states to be normalised, so that in general $\discard{} \circ \discardflip{} \neq \id{I}$.
\end{remark}

Our categories of interest will satisfy the following property allowing one to `bend wires' in diagrams. A symmetric monoidal category is \emph{compact} when every object $A$ has a \emph{dual object}, that is an object $A^*$ along with a state $\tinycup$  of $A^* \otimes A$ (the `cup') and effect $\tinycap$ of $A \otimes A^*$ (the `cap') satisfying the \emph{snake equations}:
\[
\tikzfig{yank-shrunk}
\]
where in the right-hand sides above we draw the identity morphism on $A$ as an upward directed wire and that on $A^*$ as a downward directed one. For any dual object we define a further state and effect
\[
\tikzfig{newcup-cap}
\]
In this article we will be interested in deriving the following useful extra structure on our categories, giving the ability to `reverse' morphisms. A \emph{dagger category} is a category $\catC$ coming with an identity-on-objects contravariant involutive endofunctor $(-)^\dagger$. In other words, for each morphism $f \colon A \to B$ there is a chosen morphism $f^\dagger \colon B \to A$, satisfying
\[
f^{\dagger \dagger} = f \qquad \id{A}^\dagger = \id{A} 
\qquad (g \circ f)^\dagger = f^\dagger \circ g^\dagger
\]
A \emph{dagger symmetric monoidal} category moreover has that $(f \otimes g)^\dagger = f^\dagger \otimes g^\dagger$ and that each coherence isomorphism is a \emph{unitary}, i.e.~an isomorphism $u$ with $u^{-1} = u^\dagger$. Finally, such a category $\catC$ is \emph{dagger compact} when every object has a \emph{dagger dual}, i.e.~a dual whose cup and cap satisfy
\[
\tikzfig{dagdual} 
\ \ = \ \ 
\left(
\tikzfig{dagdual2-alt} 
\right)^\dagger
\]
When $\catC$ has discarding it automatically has a choice of completely mixed states 
$\discardflip{} = \discard{}^\dagger$ 
and we additionally require on each object that 
\begin{equation} \label{eq:duals-discard}
\tikzfig{ax-1-2-alt}
\qquad
\qquad
\tikzfig{ax-1-1-alt}
\end{equation}

Dagger categories come with their own graphical calculus~\cite{selinger2011survey} in which the dagger corresponds to reflecting morphisms vertically as in~\eqref{eq:intro-pic}. However here we will be deriving the existence of daggers in the first place, and so not use such diagrammatic rules.


\begin{examples}
Our main examples of dagger compact categories with discarding are the following.
\begin{enumerate}[leftmargin=*]
\item 
In $\FCStar$ the objects are finite-dimensional C*-algebras and the morphisms completely positive maps $f \colon A \to B$. Here $I=\mathbb{C}$ and $\otimes$ is the standard one of such algebras, with the scalars given by $\mathbb{R}^{\geq 0}$. The dagger is given by the Hermitian adjoint, $\discard{}$ by the assignment $a \mapsto \Tr(a)$, and $\discardflip{A}$ is the assignment $1 \mapsto 1_A$. 
The commutative algebras give a subcategory $\Class$.

\item 
$\Quant{}$ is defined as the full subcategory of $\FCStar$ on algebras of the form $B(\hilbH)$. In particular states correspond to unnormalised density matrices, with $\discardflip{}$ given by the identity matrix. The 'cup' and 'cap' are given by the \emph{Bell state and effect}, respectively.

 
\item  More generally, we may consider $\Quant{S} := \CPM(\MatS)$ for any \emph{phased field}~\cite{catreconstruction}, i.e.~any involutive field  $(S, \dagger)$ whose elements of the form $a^\dagger \cdot a$ are closed under addition. Then $\Quant{\mathbb{C}} \simeq \Quant{}$ while $\Quant{\mathbb{R}}$ is quantum theory over \emph{real Hilbert spaces}~\cite{Hardy2012Holism}.


\item 
In $\Rel$, objects are sets and morphisms $R \colon A \to B$ are relations $R \subseteq A \times B$. Composition is the usual one of relations, with $I=\{\star\}$ the singleton set, $\otimes$ given by the Cartesian product, and the dagger by relational converse. $\discard{A}$ is the relation $A \to I$ with $a \mapsto \star$ for all $a \in A$, with $A^*=A$ and $\tinycup$ picking out the elements of the form $(a,a)$. This example may be generalised to $\Rel(\catC)$ for any \emph{regular category} $\catC$, or indeed any \emph{bicategory of relations}~\cite{CatsOfRelations,carboni1987cartesian}.


\end{enumerate}
\end{examples}

\section{State daggers} \label{sec:state-daggers}
Though the dagger often lacks a clear interpretation, we will see that it often has a clearer meaning on (certain) states. We now reduce dagger compactness to the following more lightweight structure.

\begin{definition}
A symmetric monoidal category $\catC$ has a \emph{state dagger} when each object $A$ comes with a mapping of states to effects
\[
\tikzfig{statemap}
\]
such that $\peassign{\id{I}}=\id{I}$ and for all states $\psi, \phi$ we have
\begin{equation} \label{eq:compositional} 
\peassign{
\left(\tikzfig{statemap-ax-0} \right)
} 
\ \  = \ \
\tikzfig{statemap-ax-1}
\qquad \qquad  \qquad
\peassign{
\left(\tikzfig{ax-3}
\right)
}
\ \ = \ \ 
\tikzfig{ax-3a}
\\ 
\end{equation}
and for all  states $\psi$ and coherence isomorphisms  $\gamma (= \alpha, \rho, \lambda, \sigma)$ of $\catC$ we have
\begin{equation}
\label{enum:comp-state-coherence}
\peassign{
\left(\tikzfig{state-coherence}\right)
}
 \ \ = \ \ 
\tikzfig{state-coherence-2}
\end{equation}
We say an object has a \emph{state dagger dual} when it has a dual object whose cup and cap satisfy
\begin{equation} \label{eq:state-dag-dual}
\tikzfig{dagdual} 
\ \ = \ \ 
\peassign{
\left(
\tikzfig{dagdual2-alt} 
\right)}
\end{equation}
\end{definition}

\begin{proposition} \label{prop:state-dags}
Specifying a dagger compact structure on a symmetric monoidal category $\catC$ is equivalent to specifying a state dagger for which every object has a state dagger dual.
\end{proposition}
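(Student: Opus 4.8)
The plan is to establish the two directions of the equivalence, using the fact that in a compact category every state $\psi \colon I \to A$ has a \emph{name} and a \emph{coname} obtained by bending the output wire, so that states of $A$ biject naturally with effects of $A^*$, and more generally every morphism is recoverable from its name as a state of $A^* \otimes B$. For the forward direction, suppose $\catC$ is dagger compact. I would define $\peassign{\psi}$ for a state $\psi \colon I \to A$ to be $\psi^\dagger \colon A \to I$. Then $\peassign{\id{I}} = \id{I}^\dagger = \id{I}$ is immediate, the two equations in~\eqref{eq:compositional} are just functoriality of the dagger applied to $\psi \circ \phi$ (when composable) and monoidality $(\psi \otimes \phi)^\dagger = \psi^\dagger \otimes \phi^\dagger$, equation~\eqref{enum:comp-state-coherence} follows since each coherence isomorphism $\gamma$ is unitary so $(\gamma \circ \psi)^\dagger = \psi^\dagger \circ \gamma^{-1}$, and the state dagger dual condition~\eqref{eq:state-dag-dual} is exactly the defining equation of a dagger dual with the cup read as a state. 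So the forward direction is essentially unwinding definitions.

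For the converse — the substantive direction — suppose $\catC$ carries a state dagger in which every object has a state dagger dual. I would first fix, for each object $A$, a chosen state dagger dual $A^*$ with cup $\eta_A$ and cap $\epsilon_A$; this makes $\catC$ compact. Now I define $f^\dagger \colon B \to A$ for a general morphism $f \colon A \to B$ by: take the name of $f$, which is a state $\ulcorner f \urcorner \colon I \to A^* \otimes B$ obtained by bending $f$'s input wire down; apply the state dagger to get an effect $\peassign{\ulcorner f \urcorner} \colon A^* \otimes B \to I$; and then bend wires back to reconstitute a morphism $B \to A$. Diagrammatically this is "rotate $f$ by $180^\circ$ using the compact structure, then reflect the resulting state via the state dagger." I would then verify in turn: (i) $\id{A}^\dagger = \id{A}$, using that $\ulcorner \id{A} \urcorner$ is the cup $\eta_A$ and that $\peassign{\eta_A} = \epsilon_A$ by the state dagger dual axiom; (ii) contravariant functoriality $(g \circ f)^\dagger = f^\dagger \circ g^\dagger$; (iii) involutivity $f^{\dagger\dagger} = f$; (iv) monoidality $(f \otimes g)^\dagger = f^\dagger \otimes g^\dagger$; (v) that the coherence isomorphisms are unitary; and (vi) that the chosen duals are genuine \emph{dagger} duals, i.e.\ that $\epsilon_A = \eta_A^\dagger$ under the newly-defined dagger, which should reduce back to~\eqref{eq:state-dag-dual}.

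The main obstacle is verifying functoriality (ii) and involutivity (iii): these require relating the name of a composite $g \circ f$ to the names of $f$ and $g$ via the compact structure (a "sliding"/yanking manipulation in the graphical calculus), and then checking that the state dagger's compositional axioms in~\eqref{eq:compositional} together with the coherence-compatibility~\eqref{enum:comp-state-coherence} are exactly what is needed to push $\peassign{-}$ through these manipulations. The key technical point is that although~\eqref{eq:compositional} only directly constrains $\peassign{-}$ on sequential and parallel composites of \emph{states}, bending wires turns the composite $g \circ f$ into a parallel composite of the states $\ulcorner f \urcorner$ and $\ulcorner g \urcorner$ with some cups and caps, and the cup/cap behaviour is controlled by the state dagger dual axiom~\eqref{eq:state-dag-dual} plus~\eqref{enum:comp-state-coherence}; so everything should go through, but bookkeeping the coherence isomorphisms carefully is where the real work lies. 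Finally I would check that the two constructions are mutually inverse: starting from a dagger compact structure, extracting the state dagger $\psi \mapsto \psi^\dagger$ and rebuilding the dagger gives back the original (since a morphism equals the "double bend" of its name and the dagger is monoidal and compatible with the compact structure), and conversely starting from a state dagger and building the full dagger, its restriction to states recovers the original state dagger (immediate, since the name of a state $\psi \colon I \to A$ is, up to a coherence isomorphism handled by~\eqref{enum:comp-state-coherence}, just $\psi$ itself with a trivial wire).
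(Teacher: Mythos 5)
Your overall architecture matches the paper's: the forward direction is the same unwinding of definitions, and for the converse the paper also produces $f^\dagger$ by forming the name of $f$, applying $\peassign{(-)}$, and bending wires back, then verifies identity, (contravariant and monoidal) functoriality, unitarity of the coherences, involutivity, and that state dagger duals become dagger duals. The one difference in organisation is that the paper \emph{defines} $f^\dagger$ as the unique morphism satisfying a naming equation and then exhibits your explicit construction as a witness, which makes independence from the choice of dual immediate.

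However, there is a genuine missing ingredient: you never establish that the state dagger is \emph{injective}, and this is the actual engine behind involutivity (and behind the paper's uniqueness-based definition). Injectivity is cheap once you have state dagger duals: if $\peassign{\psi}=\peassign{\phi}$ then tensoring each with an identity wire, using \eqref{eq:compositional} together with $\peassign{\tinycup}=\tinycap$ from \eqref{eq:state-dag-dual}, and yanking recovers $\psi=\phi$. But without it, your step (iii) does not go through by ``bookkeeping the coherence isomorphisms'': the name of $f^\dagger$ is a state manufactured from the \emph{effect} $\peassign{\ulcorner f\urcorner}$ and some cups, and none of the axioms \eqref{eq:compositional}, \eqref{enum:comp-state-coherence}, \eqref{eq:state-dag-dual} let you evaluate $\peassign{(-)}$ on such a state directly ($\peassign{(-)}$ cannot even be iterated, as it sends states to effects). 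The paper's route is to reduce $f^{\dagger\dagger}=f$ to $\psi^{\dagger\dagger}=\psi$ for states by bending wires, observe that $\psi^{\dagger\dagger}$ is by construction a state $\phi$ with $\peassign{\phi}=\psi^\dagger=\peassign{\psi}$, and conclude $\phi=\psi$ from injectivity. You should add this lemma and reroute your involutivity argument through it; the rest of your outline is sound.
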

\begin{proof}
If $\catC$ is dagger compact then the assignment $\peassign{\psi} = \psi^\dagger$ yields a state dagger since the dagger  is a monoidal functor and all coherence isomorphisms are unitary. Then~\eqref{eq:state-dag-dual} is simply the statement that every object has a dagger dual. Conversely, let us suppose that $\catC$ comes with such a state dagger.

We first show that the state dagger is injective. Suppose that $\peassign{\psi} = \peassign{\phi}$ for two states $\psi, \phi$. Then choosing any state dagger dual of their domain we have
\[
\tikzfig{injec-1-alt}
\ \ = \ \ 
\peassign{\left(\tikzfig{injec-2-alt}\right)}
\ \ = \ \ 
\peassign{\left(\tikzfig{injec-3-alt}\right)}
\ \ = \ \ 
\tikzfig{injec-4-alt}
\]
Then composing with $\tinycup$ on the left shows that $\psi = \phi$. Next, note that applying~\eqref{eq:compositional} and \eqref{enum:comp-state-coherence} we obtain the rule
\begin{equation} \label{eq:coherence-reduce}
\peassign{\left(
\tikzfig{state-coherence-reduce}
\right)}
\ \ = \ \ 
\peassign{\left(
\tikzfig{scr1}
\right)}
\ \ = \ \ 
\tikzfig{scr2}
\ \ = \ \ 
\tikzfig{state-coherence-reduce-2}
\end{equation}
for all states $\psi, \phi$. Now let $f \colon A \to B$ be any morphism and choose a state dagger dual $B^*$ for $B$. We define $f^\dagger$ to be the unique morphism such that 
\begin{equation} \label{eq:def-dagg}
\peassign{\left(
\tikzfig{unique-1-alt}
\right)}
\ \ = \ \ 
\tikzfig{unique-4-alt}
\end{equation}
Indeed since the state dagger is injective, any such morphism must be unique. To show that it exists, choose a state dagger dual $A^*$ for $A$ and then note that
 \[
\tikzfig{new-dag-0a} \ \ 
:= \ \ 
\peassign{\left(\tikzfig{new-dag-0b-alt}\right)}
\implies
\peassign{\left(
\tikzfig{unique-2-alt}
\right)}
\ \ = \ \ 
\tikzfig{unique-3-alt}
\ \ = \ \ 
\tikzfig{unique-4-alt}
 \]
as required, using~\eqref{eq:coherence-reduce} in the first equality and the snake equations in the second. Since $f'$ does not depend on our choice of dual for $B$, $f^\dagger$ does not either, and satisfies~\eqref{eq:def-dagg} for any such dual $B^*$. Now let $g \colon A' \to B'$ be any other morphism, and choose a state dagger dual $B'^*$ for $B'$. One may straightforwardly check that the state $\omega$ defined by 
\[
\tikzfig{tensor-dual-alt}
\]
exhibits $(B^* \otimes B'^*)$ as a state dagger dual of $(B \otimes B')$.
Then we have that $(f \otimes g)^\dagger = f^\dagger \otimes g^\dagger$ since 
\[
\peassign{\left(
\tikzfig{tensor-1-alt}
\right)}
\ \ = \ \ 
\peassign{\left(
\tikzfig{tensor-2-alt}
\right)}
\ \ = \ \ 
\tikzfig{tensor-3-alt}
\ \ = \ \ 
\tikzfig{tensor-4-alt}
\]
Next consider any morphism $g \colon B \to C$, and choose a state dagger dual $C^*$ for $C$. Then we have
\[
\peassign{\left(
\tikzfig{compose-1-alt}
\right)}
\ \ = \ \ 
\peassign{\left(
\tikzfig{compose-2-alt}
\right)}
\ \ = \ \ 
\tikzfig{compose-3-alt}
\ \ = \ \ 
\tikzfig{compose-4-alt}
\]
Hence $(g \circ f)^\dagger = f^\dagger \circ g^\dagger$. By definition $\id{A}^\dagger = \id{A}$ for all objects $A$. The fact that every coherence isomorphism is unitary follows from~\eqref{enum:comp-state-coherence} when $\psi$ is the state of a state dagger dual. This makes the dagger a strict symmetric monoidal functor. Moreover since by the second equation of~\eqref{eq:compositional} we have $\psi^\dagger = \peassign{\psi}$ for all states $\psi$. 

To check that the dagger is involutive, by bending wires one may see that it suffices to show that $\psi^{\dagger\dagger}=\psi$ for all states $\psi$. Now setting $f=\psi^\dagger$ and choosing the trivial state dagger dual $I^* = I$,~\eqref{eq:def-dagg} tells us that by definition $\phi := \psi^{\dagger\dagger}$ is the unique state $\phi$ with $\peassign{\phi} = \psi^\dagger$. But since $\peassign{\psi} = \psi^\dagger$ we have $\phi=\psi$ by injectivity of the state dagger. Every state dagger dual is then a dagger dual, making $\catC$ dagger compact. 

Finally note that this is a correspondence between dagger compact structures and such state daggers. Indeed we have seen that $\peassign{\psi}=\psi^\dagger$ for all states $\psi$. Conversely, when defining $\peassign{(-)}$ in this way, every dagger satisfies our definition of $f^\dagger$ above since it is an involutive monoidal functor. 
\end{proof}

This result extends to include discarding morphisms as follows.

\begin{proposition} \label{prop:disc-dag-comp}
Let $(\catC, \discardflip{})$ be a symmetric monoidal category with completely mixed states. Specifying a compatible dagger compact structure on $\catC$ is equivalent to specifying a state dagger such that each object has a state dagger dual satisfying~\eqref{eq:duals-discard}, where $\discard{} := \peassign{(\discardflip{})}$.
\end{proposition}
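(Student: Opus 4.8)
The plan is to deduce this from Proposition~\ref{prop:state-dags} by tracking the discarding and completely-mixed-state data through the correspondence established there; no new diagrammatic work should be needed.

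\emph{Compatible dagger compactness $\Rightarrow$ state dagger.} Given a dagger compact structure together with a discarding $\discard{}$ compatible with the chosen $\discardflip{}$ (i.e.~$\discard{}^\dagger = \discardflip{}$) and satisfying~\eqref{eq:duals-discard}, Proposition~\ref{prop:state-dags} already supplies the state dagger $\peassign{\psi} := \psi^\dagger$ for which each object has a state dagger dual, namely its dagger dual. Compatibility gives $\peassign{\discardflip{}} = \discardflip{}^\dagger = \discard{}$ on each object, so the recipe $\discard{} := \peassign{\discardflip{}}$ recovers the original discarding, and~\eqref{eq:duals-discard} holds for these duals by hypothesis.

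\emph{State dagger $\Rightarrow$ compatible dagger compactness.} Conversely, suppose $\catC$ has a state dagger for which every object has a state dagger dual satisfying~\eqref{eq:duals-discard}, with $\discard{} := \peassign{\discardflip{}}$. Proposition~\ref{prop:state-dags} produces a dagger compact structure with $\psi^\dagger = \peassign{\psi}$ for all states $\psi$, so in particular $\discard{} = \discardflip{}^\dagger$. First I would check that this $\discard{}$ is a genuine discarding: since the derived dagger is a strict symmetric monoidal functor fixing $\id{I}$, applying it to the defining equations of $\discardflip{}$ yields $\discard{I} = \id{I}$ and $\discard{A \otimes B} = \discard{A} \otimes \discard{B}$. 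Then $\discard{}^\dagger = \discardflip{}^{\dagger\dagger} = \discardflip{}$ recovers the chosen completely mixed states, and~\eqref{eq:duals-discard} holds by assumption, so the resulting structure is compatible with $\discardflip{}$.

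\emph{Mutual inverseness.} Finally, that these two assignments are mutually inverse follows from the analogous statement in Proposition~\ref{prop:state-dags}, noting that in both directions the discarding is pinned down by the chosen $\discardflip{}$ as $\discardflip{}^\dagger = \peassign{\discardflip{}}$, so nothing is added or lost. The only step genuinely beyond Proposition~\ref{prop:state-dags} is verifying that $\peassign{\discardflip{}}$ obeys the discarding axioms, and this is immediate from monoidal functoriality of the derived dagger; I therefore expect no real obstacle, the result being essentially a bookkeeping corollary.
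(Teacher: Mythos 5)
Your proposal is correct and takes essentially the same route as the paper, which proves this proposition in one line by reducing it to Proposition~\ref{prop:state-dags} and observing that condition~\eqref{eq:duals-discard} is exactly compatibility of the duals with $\discard{}$. The only detail you add beyond the paper's argument --- checking that $\peassign{(\discardflip{})}$ satisfies the discarding axioms via strict monoidality and involutivity of the derived dagger --- is a worthwhile piece of bookkeeping that the paper leaves implicit.
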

\begin{proof}
By Proposition~\ref{prop:state-dags}, noting that the condition is equivalent to compatibility of the duals with $\discard{}$.
\end{proof}

\section{Daggers from dilations} \label{sec:extending-daggers}

In defining a dagger compact structure on a whole category, it can sometimes be useful to extend an existing one defined on a subcategory, as we will see for quantum theory in the next section. First, we call a state $\sigma$ a \emph{dilation} of a state $\rho$ when
\[
\tikzfig{dilation-state}
\]
By a \emph{dilation structure} we mean a symmetric monoidal category $\catC$ with discarding $\discard{}$ coming with a chosen symmetric monoidal subcategory $\dsubcat$ such that every state in $\catC$ has a dilation belonging to $\catD$. We call a dilation structure \emph{dagger compact} when $\catC$ is a dagger compact category with discarding and $\dsubcat$ is a dagger compact subcategory of $\catC$. 

\begin{theorem} \label{thm:dag-compact-from-dilation}
A dilation structure $(\catC,\dsubcat,\discard{})$ is dagger compact iff $\catC$ has chosen completely mixed states, $\dsubcat$ has a state dagger satisfying
\begin{equation} \label{eq:dag-resp-state}
\tikzfig{ax-5}
\end{equation}
and in $\catD$ every object $A$ has a state dagger dual $A^*$ which in $\catC$ satisfies \eqref{eq:duals-discard}.
\end{theorem}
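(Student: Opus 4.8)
The plan is to reduce to Proposition~\ref{prop:disc-dag-comp} by building, from the state dagger on $\dsubcat$, a state dagger on all of $\catC$ whose duals are the ones already chosen in $\dsubcat$ and for which $\discard{} = \peassign{(\discardflip{})}$; one then checks in addition that the resulting dagger restricts to the dagger compact structure on $\dsubcat$. The ``only if'' direction is the straightforward one: if the dilation structure is dagger compact, then $\discardflip{} = \discard{}^\dagger$ is the chosen completely mixed state, restricting $(-)^\dagger$ to states of $\dsubcat$ is a state dagger by Proposition~\ref{prop:state-dags}, the chosen duals of the dagger compact subcategory $\dsubcat$ are dagger duals of $\catC$ and hence state dagger duals satisfying~\eqref{eq:duals-discard}, and~\eqref{eq:dag-resp-state} is obtained by applying the contravariant monoidal dagger functor to a dilation equation $(\id{A} \otimes \discard{E}) \circ \sigma = \rho$, which turns it into $\peassign{\sigma} \circ (\id{A} \otimes \discardflip{E}) = \peassign{\rho}$.

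For the ``if'' direction, given a state $\rho \colon I \to A$ of $\catC$ choose a dilation $\sigma \colon I \to A \otimes E$ in $\dsubcat$ and set $\peassign{\rho} := \peassign{\sigma} \circ (\id{A} \otimes \discardflip{E})$. Axiom~\eqref{eq:dag-resp-state} is exactly the statement that this does not depend on the chosen dilation, so $\peassign{(-)}$ is well defined on all states of $\catC$; taking the trivial dilation $\sigma = \rho$ (with $E = I$ and $\discardflip{I} = \id{I}$) shows it extends the state dagger of $\dsubcat$, in particular $\peassign{\id{I}} = \id{I}$. The compositional axioms~\eqref{eq:compositional} and the coherence axiom~\eqref{enum:comp-state-coherence} for the extended map follow from those for the state dagger of $\dsubcat$: choosing dilations $\sigma,\tau$ of states $\psi,\phi$, the states $\sigma \otimes \tau$ (with environments collected by coherence isomorphisms) and $(\gamma \otimes \id{E}) \circ \sigma$ are again dilations of $\psi \otimes \phi$ and $\gamma \circ \psi$, because $\discard{}$ is compatible with $\otimes$ and the coherence isomorphisms; applying the corresponding axioms inside $\dsubcat$ and then composing with the completely mixed states on the environments gives the identities in $\catC$. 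It remains to identify $\peassign{(\discardflip{A})}$ with the given $\discard{A}$: by the first equation of~\eqref{eq:duals-discard} a cup of $A$ is a dilation of $\discardflip{A}$ with environment $A^*$, and its image under the state dagger of $\dsubcat$ is the corresponding cap by~\eqref{eq:state-dag-dual}, so $\peassign{(\discardflip{A})}$ equals that cap with $\discardflip{A^*}$ plugged into the $A^*$-leg, which is $\discard{A}$ by the second equation of~\eqref{eq:duals-discard}. Proposition~\ref{prop:disc-dag-comp} now makes $\catC$ dagger compact with the given discarding; and since the dagger of a morphism $f$ of $\dsubcat$ may be computed by bending $f$ into a state using a dual chosen in $\dsubcat$, applying $\peassign{(-)}$ there, and bending back, while the extended state dagger agrees with $\dsubcat$'s on $\dsubcat$-states, this dagger restricts to that of $\dsubcat$, so $\dsubcat$ is a dagger compact subcategory.

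The main difficulty is bookkeeping rather than conceptual: well-definedness of $\peassign{(-)}$ is handed to us by~\eqref{eq:dag-resp-state}, but one must still pick dilations of tensors, composites, coherence twists, and of the completely mixed states compatibly enough that each state-dagger axiom descends cleanly from $\dsubcat$, and the identity $\peassign{(\discardflip{})} = \discard{}$ is the single place where both halves of~\eqref{eq:duals-discard} are genuinely needed. A minor point is that $\dsubcat$'s chosen duals, which satisfy~\eqref{eq:state-dag-dual} for $\dsubcat$'s state dagger, continue to do so for the extended one; this is immediate since the cup and cap lie in $\dsubcat$ and the extension agrees with the original there.
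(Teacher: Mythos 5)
Your proposal is correct and follows essentially the same route as the paper: extend the state dagger from $\dsubcat$ to $\catC$ by daggering a chosen dilation and plugging $\discardflip{}$ into the environment wire, use~\eqref{eq:dag-resp-state} for well-definedness and the trivial dilation for agreement on $\dsubcat$, verify the state-dagger axioms by transporting dilations along $\otimes$ and coherence maps (with both halves of~\eqref{eq:duals-discard} giving $\peassign{(\discardflip{})}=\discard{}$ via the cup as dilation), and conclude by Proposition~\ref{prop:disc-dag-comp}. The only step you gloss over slightly is the second equation of~\eqref{eq:compositional} for the extended map, which in the paper genuinely invokes the assumption on duals rather than descending purely formally from $\dsubcat$, but this is a matter of detail rather than a gap in the approach.
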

\begin{proof}
By Proposition~\ref{prop:state-dags} the state dagger makes $\dsubcat$ dagger compact; we now extend it to $\catC$ as follows. For any state $\rho$, let $\psi$ be a dilation of $\rho$ belonging to $\dsubcat$. Then we set
\begin{equation} \label{eq:ext-dag-def-alt}
\tikzfig{state-dag-def}
\end{equation}
Thanks to \eqref{eq:dag-resp-state} this is independent of our choice of dilation $\psi$. Since any state in $\dsubcat$ of an object $A$ may be seen as a dilation of itself (via the object $I$ and coherence isomorphism $A \otimes I \simeq{} I$), this coincides with the state dagger on $\dsubcat$. Let us check that this is compositional. 
Since dilations compose under $\otimes$, it is easy to see that the left-hand equation of~\eqref{eq:compositional} is satisfied. Since all coherence isomorphisms belong to $\dsubcat$ the condition~\eqref{enum:comp-state-coherence} also lifts easily to $\catC$. By definition $\peassign{(\discardflip{})} = \discard{}$ since $\discardflip{}$ has dilation $\tinycup$ and 
\[
\tikzfig{bendy-arg}
\]
Let $\rho, \sigma \in \catC$ with respective dilations $\psi, \phi \in \dsubcat$. Then using our assumption on duals and the definition of our state dagger we have 
\[
\peassign{
\left(\tikzfig{ax-3-again}
\right)
}
\ \ = \ \ 
\peassign{
\left(\tikzfig{ax-3-again-3a}
\right)
}
\ \ = \ \ 
\tikzfig{ax-3-again-3a-flip}
\ \ = \ \ 
\tikzfig{ax-3a-again}
\]
Next note that since every object of $\catC$ has a state (e.g. $\discard{}$), $\catC$ and $\dsubcat$ have the same objects by the definition of a dilation structure. Hence every object has a dual. Finally, by assumption on $\dsubcat$ these duals satisfy the requirements of Proposition~\ref{prop:disc-dag-comp}.
\end{proof}

\section{Daggers in $\CPM$ categories} \label{sec:CP-axiom}

A useful source of dagger compact categories with discarding are those $\CPM(\dsubcat)$ arising from Selinger's \emph{CPM construction} on a dagger compact category $\dsubcat$~\cite{Selinger2007139}. The motivating example is
\[
\Quant{} \simeq \CPM({\Quant{}}_\pure)
\]
where ${\Quant{}}_\pure$ is the subcategory of `pure' morphisms, explored in the next section. Coecke has shown that such categories correspond precisely, via $\catC \simeq \CPM(\dsubcat)$, to dagger compact dilation structures $(\catC, \dsubcat, \discard{})$ satisfying the \emph{CP axiom}~\cite{coecke2008axiomatic}:
\begin{equation*} 
\tikzfig{CP}
\end{equation*}
for all $f, g \in \dsubcat$ with the same domain. This indeed holds for all pure morphisms $f, g$ in $\Quant{}$. 

The analogous condition for state daggers allows us to simplify our earlier results by dropping several conditions. We can characterise $\CPM$ categories in terms of state daggers as follows.

\begin{theorem} \label{thm:CP-axiom}
A category $\catC$ arises from the $\CPM$ construction precisely when it belongs to a dilation structure ($\catC$, $\dsubcat$, $\discard{}$) with completely mixed states $\discardflip{}$ and a state dagger on $\dsubcat$ satisfying 
\begin{equation} \label{eq:CP-state-dagger}
\tikzfig{ax-4}
\end{equation}
for all $\psi, \phi \in \dsubcat$, and such that in $\dsubcat$ every object has a state dagger dual satisfying \eqref{eq:duals-discard}. Moreover, we no longer require condition~\eqref{enum:comp-state-coherence} of a state dagger but only the special case 
\begin{equation} \label{eq:symmetry-statemap}
\peassign{\left(
\tikzfig{symmetry-statemap}
\right)}
\ \ = \ \ 
\tikzfig{symmetry-statemap-2}
\end{equation}
for all states $\psi$ in $\dsubcat$.
\end{theorem}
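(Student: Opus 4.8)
The plan is to reduce everything to Coecke's characterisation recalled just above, so that it suffices to show the hypotheses listed here are equivalent to the statement ``$(\catC,\dsubcat,\discard{})$ is a dagger compact dilation structure satisfying the CP axiom''. One direction is quick. If $\catC\simeq\CPM(\dsubcat)$ then by Coecke's result $(\catC,\dsubcat,\discard{})$ is a dagger compact dilation structure obeying the CP axiom; Propositions~\ref{prop:state-dags} and~\ref{prop:disc-dag-comp} then furnish a state dagger $\peassign{\psi}=\psi^\dagger$ on $\dsubcat$ whose state dagger duals satisfy~\eqref{eq:duals-discard}, the special symmetry law~\eqref{eq:symmetry-statemap} holds because $(-)^\dagger$ is a symmetric monoidal functor, and~\eqref{eq:CP-state-dagger} is simply the CP axiom read off on states (take the morphisms $f,g$ there to be states and unfold the pictures). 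Finally every state in $\catC$ has a dilation in $\dsubcat$ by definition of a dilation structure.

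The substance is the converse. Assume the stated data. The first task, which also establishes the \emph{moreover} clause, is to promote the weak state dagger on $\dsubcat$ to a genuine one, i.e.\ to recover the full coherence law~\eqref{enum:comp-state-coherence} from just~\eqref{eq:symmetry-statemap}. I would first extract injectivity of the state dagger on $\dsubcat$ from~\eqref{eq:CP-state-dagger}, exactly as in the proof of Proposition~\ref{prop:state-dags}: bend a state against a state dagger dual, apply~\eqref{eq:CP-state-dagger}, and cancel with $\tinycup$. Then for each remaining coherence family $\gamma=\alpha,\rho,\lambda$ I would compare $\peassign{(\gamma\circ\psi)}$ with $\gamma\circ\peassign{\psi}$ by bending both sides into states; these agree by injectivity together with~\eqref{eq:symmetry-statemap}, the compositional laws~\eqref{eq:compositional}, and the fact that $\alpha,\rho,\lambda$ are expressible through $\sigma$ and the compact structure already under control. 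This yields a bona fide state dagger on $\dsubcat$, hence $\dsubcat$ is dagger compact by Proposition~\ref{prop:state-dags}.

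Next I would derive~\eqref{eq:dag-resp-state} from~\eqref{eq:CP-state-dagger}: the latter is in essence a determinacy statement, and it forces the two sides of~\eqref{eq:dag-resp-state} to coincide whenever $\sigma_1,\sigma_2\in\dsubcat$ dilate a common state $\rho$. With~\eqref{eq:dag-resp-state} established and~\eqref{eq:duals-discard} assumed, Theorem~\ref{thm:dag-compact-from-dilation} applies verbatim and shows $(\catC,\dsubcat,\discard{})$ is a dagger compact dilation structure, the induced dagger restricting on states to $\peassign{(-)}$ and sending $\discardflip{}$ to $\discard{}$ (its dilation being $\tinycup$, by~\eqref{eq:duals-discard}). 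It then remains to recover the CP axiom itself: for $f,g\in\dsubcat$ with a common domain, bend $f$ and $g$ into states of $\dsubcat$; since $(-)^\dagger$ is now a dagger compact functor the dagger of a bent morphism is the bend of its dagger, so~\eqref{eq:CP-state-dagger} applied to these states yields the CP axiom for $f$ and $g$. Coecke's theorem then delivers $\catC\simeq\CPM(\dsubcat)$, completing the equivalence. The main obstacle is the first step of the converse: squeezing the full coherence law~\eqref{enum:comp-state-coherence} and the dilation law~\eqref{eq:dag-resp-state} out of the comparatively meagre~\eqref{eq:symmetry-statemap} and~\eqref{eq:CP-state-dagger}, which hinges on reading~\eqref{eq:CP-state-dagger} as a faithfulness/determinacy property and on careful wire-bending against state dagger duals.
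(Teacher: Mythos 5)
Your overall strategy---reducing to Coecke's characterisation, reading \eqref{eq:CP-state-dagger} as the CP axiom on bent states, deriving \eqref{eq:dag-resp-state} as a determinacy consequence of \eqref{eq:CP-state-dagger}, and then invoking Proposition~\ref{prop:state-dags} and Theorem~\ref{thm:dag-compact-from-dilation}---is the same as the paper's. The forward direction is fine, and your reading of the derivation of \eqref{eq:dag-resp-state} is essentially right (the paper's version needs two applications of \eqref{eq:CP-state-dagger}, one of them horizontally reflected, plus naturality of the swap, but the ``determinacy'' idea is the correct one).

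The gap is in the step you yourself flag as the main obstacle: getting by with only \eqref{eq:symmetry-statemap} in place of \eqref{enum:comp-state-coherence}. Your plan is to prove the full coherence law for $\gamma=\alpha,\rho,\lambda$ on all states by arguing these maps are ``expressible through $\sigma$ and the compact structure already under control''. That is not true in a general symmetric monoidal category---the associator and unitors are not generated by the symmetry---and insofar as you want to route them through cups and caps, the compatibility of $\peassign{(-)}$ with those very manipulations for non-symmetry coherence maps is exactly what is at issue, so the argument is circular. The paper sidesteps this: inspecting the proof of Proposition~\ref{prop:state-dags}, condition \eqref{enum:comp-state-coherence} is used only (i) to derive \eqref{eq:coherence-reduce}, for which the $\sigma$-case \eqref{eq:symmetry-statemap} already suffices, and (ii) to show that coherence isomorphisms are unitary, which only needs the case where $\psi$ is the cup of a state dagger dual. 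For (ii) the paper proves the required identity for an arbitrary \emph{causal} isomorphism $\gamma$ of $\dsubcat$, using that coherence isomorphisms are automatically causal in any category with discarding and then applying \eqref{eq:CP-state-dagger} to cancel $\gamma$ against $\discard{}$. Causality is the ingredient your sketch never invokes, and without it this step does not close. (A minor point: injectivity of the state dagger follows from the state dagger duals via \eqref{eq:state-dag-dual} exactly as in Proposition~\ref{prop:state-dags}, not from \eqref{eq:CP-state-dagger}; this does not affect your conclusion.)
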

\begin{proof}
For any dagger compact dilation structure satisfying the CP axiom, as before set $\peassign{\psi} = \psi^\dagger$ for all states $\psi$. By bending wires, ~\eqref{eq:CP-state-dagger} is equivalent to the CP axiom, and~\eqref{eq:symmetry-statemap} holds since the dagger is symmetric monoidal.

Conversely, let us show that $\dsubcat$ satisfies the requirements of Proposition~\ref{prop:state-dags}. Inspecting the proof, we see that since the rule \eqref{eq:symmetry-statemap} allows us to deduce~\eqref{eq:coherence-reduce}, we only require the condition~\eqref{enum:comp-state-coherence} on coherence isomorphisms $\gamma$ in the case where $\psi$ is the state of a state dagger dual (in order to deduce that $\gamma$ is unitary). Let us establish this more generally for any causal isomorphism $\gamma$ in $\dsubcat$. We need to show that the effect $\gamma'$ defined by 
\begin{equation} \label{eq:coher-simple}
\tikzfig{gamma}
\ \ := \ \ 
\peassign{\left(\tikzfig{state-coherence-2-alt}\right)}
\qquad \qquad 
\text{ satisfies }
\qquad \qquad
\tikzfig{state-coherence-simpler-4-alt}
\end{equation}
Now any coherence isomorphism belongs to $\dsubcat$ by assumption, and is causal in any category with discarding~\cite[Chapter~1]{thesis}. Hence by~\eqref{eq:CP-state-dagger} we have implications 
\[
\tikzfig{state-coherence-simpler-alt}
\quad
\text{ and so }
\quad
\tikzfig{state-coherence-simpler-3-alt}
\]
by~\eqref{eq:CP-state-dagger}, from which~\eqref{eq:coher-simple} follows as required. 
Hence by Proposition~\ref{prop:state-dags} $\dsubcat$ is dagger compact.

By Theorem~\ref{thm:dag-compact-from-dilation} it remains to check that~\eqref{eq:dag-resp-state} holds in $\catC$. Suppose that $\psi, \phi \in \dsubcat$ satisfy the left-hand side of~\eqref{eq:dag-resp-state}. Then we must show that the effect
\[
\tikzfig{psi-eff-alt}
\]
is equal to the effect obtained by replacing $\psi$ by $\phi$.
But by \eqref{eq:CP-state-dagger} this effect is determined by the morphism
\[
\tikzfig{psi-cp-alt-2}
\]
using naturality of $ \ \tinyswap$ in the second step. But by (the horizontal reflection of) \eqref{eq:CP-state-dagger} the latter morphism is determined by applying $\discard{}$ to the right output of $\psi$, which is equal to doing the same to $\phi$.
\end{proof}

\section{Daggers from purification} \label{sec:daggers-from-purification}

Let us now apply our earlier results to derive dagger compactness from operational axioms on a category $\catC$ with discarding and completely mixed states.


First, recall that $\catC$ has \emph{zero morphisms} when there is a morphism $0 \colon A \to B$ between any two objects $A, B$ which compose via $\circ $ and $\otimes$ with any morphism to give $0$. We then say that $\catC$ satisfies \emph{normalisation} when every non-zero state is of the form
\[
\tikzfig{normalisation}
\]
for some scalar $r$ and unique causal state $\sigma$, which we call the \emph{normalisation} of $\rho$. Normalisation is automatic when all non-zero scalars are invertible, as in our main examples here. Our axioms will concern morphisms of the following form. 

\begin{definition}~\cite{chiribella2014distinguishability} A morphism $f$ is \emph{pure} when either $f = 0$ or $f$ satisfies
\begin{equation} \label{eq:wholedef}
\tikzfig{whole-def-no-labels}
\ \ \ 
\text{  for some causal $\rho$}
\end{equation}
 This notion of purity in fact typically coincides with the more usual one in terms of mixing~\cite[Chap.~4]{thesis}. Indeed, in $\Quant{}$ a morphism is pure in this sense precisely when it is given by a Kraus operator.
\end{definition}

\begin{lemma} \label{lem:normalisation-consequences}
Let $\catC$ satisfy normalisation. Then pure states are closed under normalisation and $\otimes$.
\end{lemma}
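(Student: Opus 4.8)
The plan is to read \eqref{eq:wholedef} as saying that a state $\psi$ is pure exactly when $\psi = 0$ or every \emph{dilation} $g$ of $\psi$ is trivial, in the sense that $g = \psi \otimes \mu$ for some causal state $\mu$. Two elementary observations will do most of the work. First, any dilation of a causal state is again causal: discarding the extra output of $g$ and using the defining equations of $\discard{}$ gives $\discard{}\circ g = \discard{}\circ\psi = \id{I}$ whenever $\psi$ is causal. Second, by the normalisation hypothesis, if a nonzero state is written as a scalar times a causal state then that causal state is uniquely determined; I will use this repeatedly to ``cancel'' scalars that cannot be cancelled directly.

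For closure under normalisation, let $\rho$ be a nonzero pure state with normalisation $\rho = r\cdot\sigma$, so $\sigma$ is causal and nonzero. Given any dilation $g$ of $\sigma$, note that $g$ is causal and that $r\cdot g$ is then a dilation of $\rho$. Purity of $\rho$ gives a causal $\mu$ with $r\cdot g = \rho\otimes\mu = r\cdot(\sigma\otimes\mu)$; now $g$ and $\sigma\otimes\mu$ are both causal states presenting the nonzero state $r\cdot g$ in normalised form, so by uniqueness they agree: $g = \sigma\otimes\mu$. Hence every dilation of $\sigma$ is trivial and $\sigma$ is pure.

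For closure under $\otimes$, I would first handle two \emph{causal} pure states $\sigma,\tau$. A dilation $g$ of $\sigma\otimes\tau$ may be viewed as a dilation of $\sigma$ whose environment carries both the $\tau$-system and the original environment, using that $\tau$ is causal; purity of $\sigma$ then gives $g = \sigma\otimes g'$ with $g'$ causal. Discarding the $\sigma$-output of the identity defining the dilation and cancelling $\sigma$ (which is causal) shows that $g'$ is a dilation of $\tau$, so purity of $\tau$ gives $g' = \tau\otimes\mu$ and hence $g = (\sigma\otimes\tau)\otimes\mu$. For general pure $\psi,\phi$: if either is $0$ then $\psi\otimes\phi = 0$ is pure, and otherwise, writing $\psi = r\sigma$ and $\phi = s\tau$ in normalised form, the first part makes $\sigma,\tau$ causal pure, hence $\sigma\otimes\tau$ is causal pure, and $\psi\otimes\phi = rs\cdot(\sigma\otimes\tau)$. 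If this is $0$ we are done; otherwise $\sigma\otimes\tau$ is its normalisation, so given a dilation $g$ of $\psi\otimes\phi$ its own normalised causal part $\hat g$ is, by the uniqueness argument again, a dilation of $\sigma\otimes\tau$, whence $\hat g = (\sigma\otimes\tau)\otimes\mu$ and multiplying back the scalar yields $g = (\psi\otimes\phi)\otimes\mu$.

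The step I expect to be most delicate is exactly this interplay between dilations and scalars: purity is stated for \emph{all} dilations, but the obvious ways of projecting a dilation of $\psi\otimes\phi$ down to the factors produce dilations of scalar multiples of $\psi$ or $\phi$ rather than of $\psi$ or $\phi$ themselves, and scalars need not be cancellable. Routing every such cancellation through normalisation and uniqueness of the causal part is what makes the argument go through, and is precisely where the standing hypothesis on $\catC$ is used; the surrounding bookkeeping (causality of dilations, the interchange law, and cancellation of causal states) is routine.
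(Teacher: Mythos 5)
Your proof is correct and takes essentially the same route as the paper's: closure under normalisation is obtained by cancelling the scalar via uniqueness of the causal part in a normalisation, and closure under $\otimes$ is reduced to the causal case and then handled by peeling off the two factors through successive applications of purity and discarding. The scalar bookkeeping you flag as the delicate point is precisely the step the paper also routes through uniqueness of normalisation, so nothing is missing.
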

\begin{proof}

For the first part, say $\psi = \phi \circ r$ for a scalar $r$, where $\phi$ is causal. Then if some state $\rho$ dilates $\phi$ it is causal, and moreover $\rho \circ r$ dilates $\psi$. Hence $\rho \circ r = \psi \otimes \sigma = (\phi \otimes \sigma) \circ r$ for some causal state $\sigma$. From uniquness of normalisation and causality of these states we obtain $\rho = \phi \otimes \sigma$, as required. 

Next, suppose that states $\psi, \phi$ are pure, and $\rho$ is some dilation of $\psi \otimes \phi$. Then the normalisation of $\rho$ is also a dilation of the tensor $\psi' \otimes \phi'$, of the respective normalisations of $\psi, \phi$. Hence by the previous part it suffices to consider when $\psi$ and $\phi$ are causal. 
Then we have implications
\[
\tikzfig{norm-arg-0}
\implies
\tikzfig{norm-arg-1}
\]
Hence $\rho = \psi \otimes \sigma$ for some causal state $\sigma$. Composing with $\discard{}$ on the left then shows that $\sigma$ dilates $\phi$ and so $\sigma = \phi \otimes \tau$ for some causal state $\tau$. Hence we have $\rho = \psi \otimes \phi \otimes \tau$, as required.
\end{proof}

\begin{definition}
We consider symmetric monoidal categories $(\catC, \discard{}, \discardflip{})$ with discarding, completely mixed states, and normalisation, satisfying the following axioms. 
\begin{enumerate}[leftmargin=*]
\item \paxiom{Purification}~\cite{chiribella2010purification}\label{ax:purification}
 Every state $\rho$ has a \emph{purification}, i.e.~a dilation
\[
\tikzfig{purif}
\]
for which $\psi$ is pure. Moreover, purifications are \emph{essentially unique} in that 
\begin{equation} \label{eq:EU}
\tikzfig{EU-purif-states}
\end{equation}
for some causal and co-causal isomorphism $U$ on $B$, for all such pure states $\psi, \phi$. 
\item \paxiom{Sharpness}~\cite{Hardy2011a}
For every causal pure state $\psi$ there is a unique pure co-causal effect $\peassign{\psi}$ with
\begin{equation} \label{eq:state-dagger-as-property}
\tikzfig{axiom-a}
\emptydiag
\end{equation}
Moreover, $\psi$ is conversely the unique causal pure state satisfying the above.

We may then similarly assign an effect to each non-zero (not necessarily causal) pure state $\psi$ via 
\begin{equation} \label{eq:extended-state-dagger}
\tikzfig{pe-extend}
\end{equation}
where $\phi$ is the (pure by Lemma~\ref{lem:normalisation-consequences}) normalisation of $\psi$, and $r=\discard{} \circ \psi$. We also set $\peassign{0}= 0$.

\item \paxiom{Pure composition} \label{ax:purecomposition}
For all causal pure states $\psi, \phi$, the following state and effect are pure with
\begin{equation} \label{eq:pure-composition}
\peassign{
\left(\tikzfig{ax-3}
\right)
}
\ \ = \ \ 
\tikzfig{ax-3a}
\end{equation}
\item \paxiom{Pre-duals}  \label{ax:unifpurif}
Each completely mixed state $\discardflip{A}$ has a purification $\omega$ satisfying
\begin{equation} \label{eq:pre-dual}
\tikzfig{axiom-b} 
\qquad \quad 
\tikzfig{axiom-b-3} 
\qquad \quad 
\tikzfig{axiom-b-2} 
\end{equation}
\item \paxiom{Identity tomography} \label{ax:Id-Tomog}
For all endomorphisms $V$ we have 
\[
\left( \ \ 
\tikzfig{isomorphism-tomography}
\ \ 
\forall \text{ pure } \psi 
\right)
\ \implies \ 
\tikzfig{isomorphism-tomography-2}
\]
\end{enumerate}
\end{definition}

\begin{examples}  \label{ex:pure-properties-example}
$\Quant{S}$ satisfies these axioms, for any phased field $S$, 
including $\Quant{}$ and $\Quant{\mathbb{R}}$. More generally, they follow from the `operational principles' of~\cite{catreconstruction}; as we prove in appendix~\ref{sec:CP-axiom}.

All of the axioms aside from purification and purity of $\omega$ in the pre-duals axiom hold in $\FCStar$ and $\Rel$. In $\FCStar$ pure morphisms between direct sums of quantum algebras are simply those induced by some single pure morphism $B(\hilbH_i) \to B(\hilbH_j)$ in $\Quant{}$. In $\Rel$ a morphism is pure iff it is empty or a singleton. The failure of purification in these examples has led to alternative definitions of purity~\cite{selby2018reconstructing,cunningham2017purity}.
\end{examples}

Let us discuss these axioms in more detail. Sharpness essentially appears as the first axiom in Hardy's reconstruction of quantum theory~\cite{Hardy2011a} and is crucial in allowing us to define a (state) dagger as a \emph{property} rather than extra structure. The characterisation of  $\peassign{\psi}$ is also similar to the `test structures' considered in~\cite{selby2016process} and relates to causal pure states being \emph{dagger kernels}; see appendix~\ref{sec:deriving-axioms}.

 Purification is a standard axiom for capturing quantum theory~\cite{chiribella2010purification,PhysRevA.84.012311InfoDerivQT,catreconstruction}. We require each $U$ to be co-causal since in $\Quant{}$ each $\discardflip{}$ is the unique state (up to scalars) preserved by all causal isomorphisms~\cite{PhysRevA.84.012311InfoDerivQT}. Symmetric monoidality of $\catC_\pure$ mainly involves pure morphisms being closed under $\circ$ by Lemma~\ref{lem:normalisation-consequences}. In the compact setting it suffices that each state~\eqref{eq:pure-composition} is pure (axiom 5 of~\cite{PhysRevA.84.012311InfoDerivQT}). The pre-duals axiom and~\eqref{eq:pure-composition} are necessary for dagger compactness. Finally, identity tomography is a special case of more general tomography axioms~\cite{Barrett2007InfoGPTs,chiribella2010purification}. It is weaker than \emph{local tomography}, since this fails in $\Quant{\mathbb{R}}$~\cite{Hardy2012Holism,PhysRevA.84.012311InfoDerivQT}.




Our main result is now the following. 

\begin{theorem}
Let $\catC$ be a symmetric monoidal category with discarding $\discard{}$ and completely mixed states $\discardflip{}$, satisfying normalisation and the above axioms. Then $\catC$ forms a dagger compact category with discarding, with the pure morphisms $\catC_\pure$ as a dagger compact subcategory.
\end{theorem}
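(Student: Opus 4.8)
The plan is to realise $(\catC,\catC_\pure,\discard{})$ as a dagger compact dilation structure and then invoke Theorem~\ref{thm:dag-compact-from-dilation}; the five axioms are arranged so that each supplies one of its hypotheses. First I would check that $\catC_\pure$ is a symmetric monoidal subcategory: it contains all identities and coherence isomorphisms, closure of pure states under $\otimes$ is Lemma~\ref{lem:normalisation-consequences}, and closure under $\circ$ (hence full symmetric monoidality, in the compact setting) follows from the \textbf{Pure composition} axiom together with normalisation. Since \textbf{Purification} says every state of $\catC$ has a dilation lying in $\catC_\pure$, this is a dilation structure, and $\catC$ carries the chosen completely mixed states $\discardflip{}$ by hypothesis.

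Next I would equip $\catC_\pure$ with a state dagger: on causal pure states $\peassign{\psi}$ is the unique pure co-causal effect furnished by \textbf{Sharpness}, on arbitrary non-zero pure states it is given by~\eqref{eq:extended-state-dagger} using normalisation, and $\peassign{0}=0$. Then $\peassign{\id{I}}=\id{I}$ is immediate from the uniqueness clause of Sharpness since $\id{I}$ is causal and pure; the second equation of~\eqref{eq:compositional} is precisely the \textbf{Pure composition} axiom on causal pure states, extended to all pure states via the extension formula and Lemma~\ref{lem:normalisation-consequences}; the first equation of~\eqref{eq:compositional}, $\peassign{\psi\otimes\phi}=\peassign{\psi}\otimes\peassign{\phi}$, follows from the same uniqueness clause once one checks the right-hand side is a pure (by Pure composition and Lemma~\ref{lem:normalisation-consequences}) co-causal effect satisfying~\eqref{eq:state-dagger-as-property}; and the coherence condition~\eqref{enum:comp-state-coherence}, equivalently that each coherence isomorphism is unitary for the dagger being built, comes from Sharpness together with \textbf{Identity tomography} to pass from the required identity on all pure states to the corresponding equality of morphisms. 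Finally~\eqref{eq:dag-resp-state} for this state dagger is verified from the essential-uniqueness clause of \textbf{Purification} — two pure dilations of the same state differ by a causal, co-causal isomorphism on the ancilla — together with the co-causality of $\peassign{(-)}$ built into Sharpness, which ensures the state dagger cannot see this ambiguity once the ancilla is discarded.

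Then I would produce the state dagger duals: for each object $A$ take the purification $\omega$ of $\discardflip{A}$ furnished by the \textbf{Pre-duals} axiom, so that bending $\omega$ supplies the cup of a candidate dual $A^*$, and the three equations of~\eqref{eq:pre-dual} yield in turn the snake equations (reduced to their action on pure states via Identity tomography), condition~\eqref{eq:state-dag-dual} that the cap is the state dagger of the bent cup (using purity of $\omega$ and the Sharpness characterisation of $\peassign{(-)}$), and the compatibility conditions~\eqref{eq:duals-discard} with $\discard{}$; the required identity $\discard{}=\peassign{(\discardflip{})}$ then holds automatically since $\tinycup$ dilates $\discardflip{}$. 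With every hypothesis of Theorem~\ref{thm:dag-compact-from-dilation} in place, $(\catC,\catC_\pure,\discard{})$ is a dagger compact dilation structure, so $\catC$ is a dagger compact category with discarding and $\catC_\pure$ is a dagger compact subcategory. (Alternatively, if one first verifies the CPM-type condition~\eqref{eq:CP-state-dagger} for pure morphisms, Theorem~\ref{thm:CP-axiom} gives the same conclusion while dispensing with the general form of~\eqref{enum:comp-state-coherence} in favour of the special case~\eqref{eq:symmetry-statemap}.)

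The main obstacle is that, unlike in Proposition~\ref{prop:state-dags}, the dagger here is only pinned down as a \emph{property} of pure states rather than handed to us as structure: nearly every verification — unitarity of the coherence isomorphisms, the snake equations, and the well-definedness~\eqref{eq:dag-resp-state} — requires bootstrapping from ``the identity holds on all pure states'' to ``the identity holds'', which is exactly the content of \textbf{Identity tomography}, and each such step must be carried out while carefully tracking the distinction between causal and co-causal normalisation, since the completely mixed states are not assumed normalised.
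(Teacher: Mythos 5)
Your proposal is correct and follows essentially the same route as the paper: build the state dagger on $\catC_\pure$ from Sharpness extended by normalisation, obtain state dagger duals from the Pre-duals axiom with the snake equations bootstrapped through Identity tomography, use essential uniqueness of purification for~\eqref{eq:dag-resp-state}, and conclude via Theorem~\ref{thm:dag-compact-from-dilation}. The only notable difference is ordering — the paper constructs the duals \emph{first}, since closure of pure morphisms under $\circ$ and the coherence condition~\eqref{enum:comp-state-coherence} are obtained by bending wires (the latter from causality/co-causality of coherence isomorphisms rather than Identity tomography) — so you should make sure the duals are in place before the steps that quietly invoke compactness.
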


\begin{proof}
We will show that $\catC_\pure$ is a subcategory with state dagger~\eqref{eq:extended-state-dagger}. Let us first show that $\catC$ has duals whose cups are pure and satisfy~\eqref{eq:state-dag-dual}, \eqref{eq:duals-discard}. First note that by definition for all pure states $\psi$ we have
\begin{equation} \label{eq:ex-state-dagger-dual}
\tikzfig{ext-dag-rule-1}
\end{equation}
Now for any object $A$, let $\omega$ be a pure state of $A \otimes B$ as in~\eqref{eq:pre-dual}, and define
\[
\tikzfig{snake-map}
\]
Then for all causal pure states $\psi$ of $A$ by~\eqref{eq:pure-composition} we have
\[
\tikzfig{snake-map-psi-alt} \emptydiag
\]
But $V$ is causal by construction and hence so is $V \circ \psi$. Moreover this state is pure by the assumption that all states and effects as in~\eqref{eq:pure-composition} are. Hence we have $V \circ \psi = \psi$. By normalisation the same holds for arbitrary pure states $\psi$. Hence by identity tomography we have $V = \id{}$. Hence $\omega$ satisfies the first snake equation, and the other equation holds similarly. By construction these duals satisfy~\eqref{eq:state-dag-dual}, \eqref{eq:duals-discard}. 

 Since $\catC$ has pure cups and caps, all isomorphisms in $\catC$ are pure, including the coherence ones. Moreover bending wires in~\eqref{eq:pure-composition} now shows that pure morphisms are closed under composition $\circ$, and they are closed under $\otimes$ by Lemma~\ref{lem:normalisation-consequences}. Hence $\catC_\pure$ is a symmetric monoidal subcategory of $\catC$.

Now let us show that~\eqref{eq:extended-state-dagger} indeed defines a state dagger on $\catC_\pure$. Firstly, note that for arbitrary (not necessarily causal) pure states $\psi$, the effect $\peassign{\psi}$ is pure since all scalars are by Lemma~\ref{lem:normalisation-consequences}. By definition we have $\peassign{\id{I}} = \id{I}$. The second condition of~\eqref{eq:compositional} holds by~\eqref{eq:pure-composition}. By normalisation it suffices to verify the remaining conditions for pure states which are causal. Now the first condition of~\eqref{enum:comp-state-coherence} follows in $\catC_\pure$ since for all causal pure states $\psi, \phi$ we have
\[
\tikzfig{derive-compact} \emptydiag
\]
Next, note that all coherence isomorphisms $\gamma$ in $\catC_\pure$ are causal and co-causal in $\catC$~\cite[Chap.~1]{thesis}. Hence~\ref{enum:comp-state-coherence} follows for any causal pure state $\psi$ since we have 
\[
 \tikzfig{state-coherence-proof} \emptydiag
\]
This establishes the presence of the state dagger. Next let us verify~\eqref{eq:dag-resp-state} for all pure states $\psi, \phi$ of some object $A \otimes B$. By normalisation it suffices to consider when $\psi, \phi$ are causal. Suppose the left-hand side of~\eqref{eq:dag-resp-state} is satisfied. Then by essential uniqueness there is some causal and co-causal isomorphism $U$ on $B$ for which the following all hold
\[
\tikzfig{Use-EU-2}
\quad
\implies
\quad
\tikzfig{Use-EU-new} \emptydiag
\quad
 \qquad 
\implies
\quad 
\tikzfig{Use-EU-3}
\]
Then since $U$ is co-causal we obtain the right-hand side of~\eqref{eq:dag-resp-state}, and so we are done by Theorem~\ref{thm:dag-compact-from-dilation}. 
\end{proof}

In principle this result should now allow us to adapt the categorical quantum reconstruction~\cite{catreconstruction} to not explicitly require a dagger, and thus be more fully operational in nature. For this one should derive our less operationally clear  `pure composition' and `pre-duals' axioms from more well-motivated ones, as we begin to explore in appendix~\ref{sec:deriving-axioms}.

\begin{remark}
 In future it would be desirable to derive dagger compactness in categories without such purification, such as $\FCStar$ and $\Rel$, for example allowing for a dagger-free version of the reconstruction~\cite{selby2018reconstructing}. Speculatively, it may suffice to characterise those \emph{self-adjoint} morphisms $U$, with $U=U^\dagger$. Indeed in both categories every state $\rho$ may be written as $U \circ \discardflip{}$ for some such morphism, so that $\rho^\dagger = \discard{} \circ U$.
\end{remark}

\bibliographystyle{eptcs}
\bibliography{thesis-bib}

\begin{thebibliography}{10}
\providecommand{\bibitemdeclare}[2]{}
\providecommand{\surnamestart}{}
\providecommand{\surnameend}{}
\providecommand{\urlprefix}{Available at }
\providecommand{\url}[1]{\texttt{#1}}
\providecommand{\href}[2]{\texttt{#2}}
\providecommand{\urlalt}[2]{\href{#1}{#2}}
\providecommand{\doi}[1]{doi:\urlalt{http://dx.doi.org/#1}{#1}}
\providecommand{\bibinfo}[2]{#2}

\bibitemdeclare{inproceedings}{abramskycoecke:categoricalsemantics}
\bibitem{abramskycoecke:categoricalsemantics}
\bibinfo{author}{S.~\surnamestart Abramsky\surnameend} \&
  \bibinfo{author}{B.~\surnamestart Coecke\surnameend} (\bibinfo{year}{2004}):
  \emph{\bibinfo{title}{{A categorical semantics of quantum protocols}}}.
\newblock In: {\sl \bibinfo{booktitle}{Logic in Computer Science 19}},
  \bibinfo{publisher}{IEEE Computer Society}, pp. \bibinfo{pages}{415--425},
  \doi{10.1109/lics.2004.1319636}.

\bibitemdeclare{article}{barnum2013symmetry}
\bibitem{barnum2013symmetry}
\bibinfo{author}{Howard \surnamestart Barnum\surnameend}, \bibinfo{author}{Ross
  \surnamestart Duncan\surnameend} \& \bibinfo{author}{Alexander \surnamestart
  Wilce\surnameend} (\bibinfo{year}{2013}): \emph{\bibinfo{title}{Symmetry,
  compact closure and dagger compactness for categories of convex operational
  models}}.
\newblock {\sl \bibinfo{journal}{Journal of philosophical logic}}
  \bibinfo{volume}{42}(\bibinfo{number}{3}), pp. \bibinfo{pages}{501--523},
  \doi{10.1007/s10992-013-9280-8}.

\bibitemdeclare{article}{Barrett2007InfoGPTs}
\bibitem{Barrett2007InfoGPTs}
\bibinfo{author}{J.~\surnamestart Barrett\surnameend} (\bibinfo{year}{2007}):
  \emph{\bibinfo{title}{{Information processing in generalized probabilistic
  theories}}}.
\newblock {\sl \bibinfo{journal}{Physical Review A - Atomic, Molecular, and
  Optical Physics}} \bibinfo{volume}{75}(\bibinfo{number}{3}),
  \doi{10.1103/PhysRevA.75.032304}.

\bibitemdeclare{article}{carboni1987cartesian}
\bibitem{carboni1987cartesian}
\bibinfo{author}{A.~\surnamestart Carboni\surnameend} \&
  \bibinfo{author}{R.~\surnamestart Walters\surnameend} (\bibinfo{year}{1987}):
  \emph{\bibinfo{title}{Cartesian bicategories {I}}}.
\newblock {\sl \bibinfo{journal}{Journal of pure and applied algebra}}
  \bibinfo{volume}{49}(\bibinfo{number}{1-2}), pp. \bibinfo{pages}{11--32},
  \doi{10.1016/0022-4049(87)90121-6}.

\bibitemdeclare{article}{chiribella2014distinguishability}
\bibitem{chiribella2014distinguishability}
\bibinfo{author}{G.~\surnamestart Chiribella\surnameend}
  (\bibinfo{year}{2014}): \emph{\bibinfo{title}{Distinguishability and
  copiability of programs in general process theories}}.
\newblock \bibinfo{note}{\bibxiv{arXiv:1411.3035}}.

\bibitemdeclare{article}{chiribella2010purification}
\bibitem{chiribella2010purification}
\bibinfo{author}{G.~\surnamestart Chiribella\surnameend},
  \bibinfo{author}{G.~M. \surnamestart D'Ariano\surnameend} \&
  \bibinfo{author}{P.~\surnamestart Perinotti\surnameend}
  (\bibinfo{year}{2010}): \emph{\bibinfo{title}{{Probabilistic theories with
  purification}}}.
\newblock {\sl \bibinfo{journal}{Physical Review A}}
  \bibinfo{volume}{81}(\bibinfo{number}{6}), p. \bibinfo{pages}{62348},
  \doi{10.1103/physreva.81.062348}.

\bibitemdeclare{article}{PhysRevA.84.012311InfoDerivQT}
\bibitem{PhysRevA.84.012311InfoDerivQT}
\bibinfo{author}{G.~\surnamestart Chiribella\surnameend},
  \bibinfo{author}{G.~M. \surnamestart D'Ariano\surnameend} \&
  \bibinfo{author}{P.~\surnamestart Perinotti\surnameend}
  (\bibinfo{year}{2011}): \emph{\bibinfo{title}{{Informational derivation of
  quantum theory}}}.
\newblock {\sl \bibinfo{journal}{Phys. Rev. A}}
  \bibinfo{volume}{84}(\bibinfo{number}{1}), p. \bibinfo{pages}{12311},
  \doi{10.1103/PhysRevA.84.012311}.

\bibitemdeclare{article}{coecke2008axiomatic}
\bibitem{coecke2008axiomatic}
\bibinfo{author}{B.~\surnamestart Coecke\surnameend} (\bibinfo{year}{2008}):
  \emph{\bibinfo{title}{Axiomatic description of mixed states from {S}elinger's
  {CPM}-construction}}.
\newblock {\sl \bibinfo{journal}{Electronic Notes in Theoretical Computer
  Science}} \bibinfo{volume}{210}, pp. \bibinfo{pages}{3--13},
  \doi{10.1016/j.entcs.2008.04.014}.

\bibitemdeclare{incollection}{coecke2011categories}
\bibitem{coecke2011categories}
\bibinfo{author}{B.~\surnamestart Coecke\surnameend} \&
  \bibinfo{author}{E.~\surnamestart Paquette\surnameend}
  (\bibinfo{year}{2011}): \emph{\bibinfo{title}{{Categories for the practising
  physicist}}}.
\newblock In: {\sl \bibinfo{booktitle}{New Structures for Physics}},
  \bibinfo{publisher}{Springer Berlin Heidelberg}, pp.
  \bibinfo{pages}{173--286}, \doi{\detokenize{10.1007/978-3-642-12821-9_3}}.

\bibitemdeclare{inproceedings}{cunningham2017purity}
\bibitem{cunningham2017purity}
\bibinfo{author}{O.~\surnamestart Cunningham\surnameend} \&
  \bibinfo{author}{C.~\surnamestart Heunen\surnameend} (\bibinfo{year}{2018}):
  \emph{\bibinfo{title}{Purity through Factorisation}}.
\newblock In: {\sl \bibinfo{booktitle}{{Proceedings of the 14th International
  Conference on} Quantum Physics and Logic}}, {\sl \bibinfo{series}{Electronic
  Proceedings in Theoretical Computer Science}} \bibinfo{volume}{266}, pp.
  \bibinfo{pages}{315--328}, \doi{10.4204/EPTCS.266.20}.

\bibitemdeclare{article}{Hardy2011a}
\bibitem{Hardy2011a}
\bibinfo{author}{L.~\surnamestart Hardy\surnameend} (\bibinfo{year}{2011}):
  \emph{\bibinfo{title}{{Reformulating and Reconstructing Quantum Theory}}}.
\newblock \bibinfo{note}{\bibxiv{arXiv:1104.2066}}.

\bibitemdeclare{article}{Hardy2012Holism}
\bibitem{Hardy2012Holism}
\bibinfo{author}{L.~\surnamestart Hardy\surnameend} \&
  \bibinfo{author}{W.~\surnamestart Wootters\surnameend}
  (\bibinfo{year}{2012}): \emph{\bibinfo{title}{{Limited Holism and
  Real-Vector-Space Quantum Theory}}}.
\newblock {\sl \bibinfo{journal}{Foundations of Physics}}
  \bibinfo{volume}{42}(\bibinfo{number}{3}), pp. \bibinfo{pages}{454--473},
  \doi{10.1007/s10701-011-9616-6}.

\bibitemdeclare{inproceedings}{CatsOfRelations}
\bibitem{CatsOfRelations}
\bibinfo{author}{C.~\surnamestart Heunen\surnameend} \&
  \bibinfo{author}{S.~\surnamestart Tull\surnameend} (\bibinfo{year}{2015}):
  \emph{\bibinfo{title}{Categories of relations as models of quantum theory}}.
\newblock In: {\sl \bibinfo{booktitle}{{ Proceedings of the 12th International
  Workshop on} Quantum Physics and Logic}}, {\sl \bibinfo{series}{Electronic
  Proceedings in Theoretical Computer Science}} \bibinfo{volume}{195}, pp.
  \bibinfo{pages}{247--261}, \doi{10.4204/EPTCS.195.18}.

\bibitemdeclare{article}{selby2018reconstructing}
\bibitem{selby2018reconstructing}
\bibinfo{author}{J.~H. \surnamestart Selby\surnameend}, \bibinfo{author}{C.~M.
  \surnamestart Scandolo\surnameend} \& \bibinfo{author}{B.~\surnamestart
  Coecke\surnameend} (\bibinfo{year}{2018}):
  \emph{\bibinfo{title}{Reconstructing quantum theory from diagrammatic
  postulates}}.
\newblock \bibinfo{note}{\bibxiv{arXiv:1802.00367}}.

\bibitemdeclare{article}{selby2016process}
\bibitem{selby2016process}
\bibinfo{author}{John \surnamestart Selby\surnameend} \& \bibinfo{author}{Bob
  \surnamestart Coecke\surnameend} (\bibinfo{year}{2016}):
  \emph{\bibinfo{title}{Process-theoretic characterisation of the Hermitian
  adjoint}}.
\newblock {\sl \bibinfo{journal}{arXiv preprint arXiv:1606.05086}}.

\bibitemdeclare{article}{Selinger2007139}
\bibitem{Selinger2007139}
\bibinfo{author}{P.~\surnamestart Selinger\surnameend} (\bibinfo{year}{2007}):
  \emph{\bibinfo{title}{{Dagger Compact Closed Categories and Completely
  Positive Maps: (Extended Abstract)}}}.
\newblock {\sl \bibinfo{journal}{Electronic Notes in Theoretical Computer
  Science}} \bibinfo{volume}{170}, pp. \bibinfo{pages}{139--163},
  \doi{10.1016/j.entcs.2006.12.018}.

\bibitemdeclare{incollection}{selinger2011survey}
\bibitem{selinger2011survey}
\bibinfo{author}{P.~\surnamestart Selinger\surnameend} (\bibinfo{year}{2011}):
  \emph{\bibinfo{title}{{A survey of graphical languages for monoidal
  categories}}}.
\newblock In: {\sl \bibinfo{booktitle}{New structures for physics}},
  \bibinfo{publisher}{Springer}, pp. \bibinfo{pages}{289--355}.

\bibitemdeclare{article}{selinger2012finite}
\bibitem{selinger2012finite}
\bibinfo{author}{Peter \surnamestart Selinger\surnameend}
  (\bibinfo{year}{2012}): \emph{\bibinfo{title}{Finite dimensional Hilbert
  spaces are complete for dagger compact closed categories}}.
\newblock {\sl \bibinfo{journal}{arXiv preprint arXiv:1207.6972}}.

\bibitemdeclare{article}{thesis}
\bibitem{thesis}
\bibinfo{author}{S.~\surnamestart Tull\surnameend} (\bibinfo{year}{2018}):
  \emph{\bibinfo{title}{Categorical {O}perational {P}hysics. {DP}hil Thesis.}}
\newblock \bibinfo{note}{\bibxiv{arXiv:1902.00343}}.

\bibitemdeclare{article}{catreconstruction}
\bibitem{catreconstruction}
\bibinfo{author}{S.~\surnamestart Tull\surnameend} (\bibinfo{year}{2019}):
  \emph{\bibinfo{title}{A Categorical Reconstruction of Quantum Theory}}.
\newblock {\sl \bibinfo{journal}{Logical Methods in Computer Science}}.

\bibitemdeclare{article}{westerbaan2018dagger}
\bibitem{westerbaan2018dagger}
\bibinfo{author}{B.~\surnamestart Westerbaan\surnameend}
  (\bibinfo{year}{2018}): \emph{\bibinfo{title}{Dagger and dilations in the
  category of von {N}eumann algebras}}.
\newblock \bibinfo{note}{PhD Thesis. \bibxiv{arXiv:1803.01911}}.

\end{thebibliography}
\appendix

\section{Deriving the axioms} \label{sec:deriving-axioms}

Here we show how several of our axioms on $(\catC, \discard{}, \discardflip{})$ from section~\ref{sec:daggers-from-purification} follow from assumptions similar to the reconstruction~\cite{catreconstruction}. We begin with the pre-duals axiom. 
\begin{lemma}
Suppose that the sharpness and pure composition axioms hold, and $\discard{}$ is the unique effect sending every causal pure state to $\id{I}$. Then the pre-duals axiom holds iff each state $\discardflip{A}$ has a purification $\omega$ via some object $B$ which also purifies $\discardflip{B}$.
\end{lemma}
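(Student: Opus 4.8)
The plan is first to unpack~\eqref{eq:pre-dual}. Saying that $\discardflip{A}$ ``has a purification $\omega$'' means $\omega\colon I\to A\otimes B$ is pure with $(\id{A}\otimes\discard{B})\circ\omega=\discardflip{A}$, and the three conditions in~\eqref{eq:pre-dual} are $(\discard{A}\otimes\id{B})\circ\omega=\discardflip{B}$, $\peassign{\omega}\circ(\discardflip{A}\otimes\id{B})=\discard{B}$ and $\peassign{\omega}\circ(\id{A}\otimes\discardflip{B})=\discard{A}$. The first of these is exactly the statement that $\omega$ also purifies $\discardflip{B}$, so the left-to-right implication of the lemma is immediate. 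For the converse I would take any pure $\omega$ with $(\id{A}\otimes\discard{B})\circ\omega=\discardflip{A}$ and $(\discard{A}\otimes\id{B})\circ\omega=\discardflip{B}$ and derive the two remaining (``co-causality'') conditions; since the hypotheses are symmetric in $A$ and $B$, it is enough to establish $\peassign{\omega}\circ(\discardflip{A}\otimes\id{B})=\discard{B}$.

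To prove this I would use the assumption that $\discard{B}$ is the unique effect sending every causal pure state to $\id{I}$: it suffices to check $\peassign{\omega}\circ(\discardflip{A}\otimes\chi)=\id{I}$ for every causal pure $\chi\colon I\to B$. Rewriting $\discardflip{A}\otimes\chi=(\id{A}\otimes\chi)\circ\discardflip{A}$ and using the pure composition axiom~\eqref{eq:pure-composition} to absorb $\chi$ into the $B$-leg of $\peassign{\omega}$, one gets $\peassign{\omega}\circ(\id{A}\otimes\chi)=\peassign{\theta}$ for the pure state $\theta:=(\id{A}\otimes\peassign{\chi})\circ\omega$, and hence $\peassign{\omega}\circ(\discardflip{A}\otimes\chi)=\peassign{\theta}\circ\discardflip{A}$. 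The second dilation condition on $\omega$, together with sharpness (which makes $\peassign{\chi}$ co-causal, so $\peassign{\chi}\circ\discardflip{B}=\id{I}$), then gives $\discard{A}\circ\theta=(\discard{A}\otimes\peassign{\chi})\circ\omega=\peassign{\chi}\circ\discardflip{B}=\id{I}$, so $\theta$ is a non-zero causal pure state; sharpness makes $\peassign{\theta}$ co-causal in turn, whence $\peassign{\theta}\circ\discardflip{A}=\id{I}$, as required. Running the same argument on the other tensor factor of $\omega$ (using the first dilation condition) yields $\peassign{\omega}\circ(\id{A}\otimes\discardflip{B})=\discard{A}$, so~\eqref{eq:pre-dual} holds and both directions are complete.

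The step I expect to be the crux is this use of pure composition to replace the awkward quantity $\peassign{\omega}\circ(\discardflip{A}\otimes\chi)$ by $\peassign{\theta}\circ\discardflip{A}$ with $\theta$ \emph{causal}, since sharpness cannot be applied to $\omega$ itself: $\omega$ is in general not causal --- in $\Quant{}$ it is a Bell state of scalar $\discard{A}\circ\discardflip{A}\neq\id{I}$. For the same reason the pure composition axiom, as stated for causal pure inputs, does not literally apply to $\omega$; I would handle this by passing to the normalisation $\omega=\omega_0\circ n$ with $\omega_0$ causal pure (Lemma~\ref{lem:normalisation-consequences}) and $n=\discard{A\otimes B}\circ\omega$, rephrasing the dilation and co-causality conditions for $\omega$ as the corresponding conditions for $\omega_0$ with a factor of $n$ reinstated, and running the argument above for $\omega_0$; the normalisation scalars that then appear cancel in pairs against copies of $n$ to give $\id{I}$, so no invertibility of $n$ is needed (in the main examples $n$ is invertible and this is routine bookkeeping).
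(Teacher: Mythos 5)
Your argument is correct and is essentially the paper's own proof: plug an arbitrary causal pure state into one leg of $\omega$, use pure composition plus sharpness (co-causality of the resulting $\peassign{(-)}$) to see that the relevant effect sends every causal pure state to $\id{I}$, and conclude by the assumed uniqueness of $\discard{}$; the paper runs this on the $A$-leg to get the $=\discard{A}$ equation where you start on the $B$-leg, but both symmetric instances are needed and both proofs use both. Your extra care about $\omega$ failing to be causal (so that pure composition only applies after normalising and reinstating scalars) addresses a point the paper silently glosses over, but it does not change the substance of the argument.
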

\begin{proof}

Let $\omega$ be any purification of $\discardflip{A}$. Then for all causal pure states $\psi$ of $A$ we have that 
\[
\peassign{
\left(
\tikzfig{app-proof-1}
\right)
}
 \ \ \ =  \ \ \  
 \tikzfig{app-proof-2}
\qquad
\text{ and so }
\qquad
\tikzfig{app-proof-3}
 \ \  \ = 
 \emptydiag 
\]
since the left-hand state above is pure and causal as $\peassign{\psi}$ is co-causal. Since $\psi$ was arbitrary, by assumption this establishes the third equation in~\eqref{eq:pre-dual}, and similarly the second holds when $\omega$ purifies $\discardflip{B}$.
\end{proof}

Now, we will call an object \emph{trivial} when its identity morphism satisfies
\[
\tikzfig{triv-ob}
\]
Following~\cite{catreconstruction} we say that $\catC$ satisfies \emph{pure exclusion} when for any pure state of a non-trivial object there is some non-zero effect $e$ with 
\[
\tikzfig{PE-1}
\]
Let us say that $\catC$ satisfies \emph{strong pure exclusion} when additionally the dual statement holds; every pure effect $\phi$ of a non-trivial object has $\phi \circ \rho = 0$ for some non-zero state $\rho$.

Next, recall that a \emph{kernel} of a morphism $f \colon A \to B$ is a morphism $\ker(f)$ with
\[
f \circ g = 0 \iff g = \ker(f) \circ h
\]
for a unique morphism $h$. Dually a \emph{cokernel} $\coker(f)$ satisfies $g \circ f = 0 \iff g = h \circ \coker(f)$ for some unique $h$. We will always take our (co)kernels to be (co)causal, and then they are unique up to (co)causal isomorphism whenever they exist. We will say that $\catC$ has \emph{split (co)kernels} when 
\begin{itemize}
\item Every morphism $f$ has both a causal kernel and co-causal cokernel;
\item For every causal kernel $k$ there is a unique co-causal cokernel $\peassign{k}$ with $\peassign{k} \circ k = \id{}$. Conversely, $k$ is the unique causal kernel for which this equation holds.
\end{itemize}

\begin{lemma} \label{lem:one}
Suppose that $\catC$ has split (co)kernels, satisfies strong pure exclusion, and has that $\discard{A} = 0 \implies \discardflip{A} = 0 \implies \id{A} = 0$ for all objects $A$. Then every causal pure state is a kernel and the sharpness axiom holds.
\end{lemma}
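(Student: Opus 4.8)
The plan is to reduce the sharpness axiom to the claim that \emph{a causal state is pure exactly when it is a causal kernel}, together with its formal dual about co-causal effects and cokernels; granting these, sharpness is a bookkeeping consequence of the split (co)kernel structure (read self-dually). Concretely, if $\psi\colon I\to A$ is a causal pure state then it is a causal kernel, so the hypothesis supplies a unique co-causal cokernel $\peassign{\psi}\colon A\to I$ with $\peassign{\psi}\circ\psi=\id{I}$, which is a pure co-causal effect by the dual statement; this gives the existence in~\eqref{eq:state-dagger-as-property}. The assignment $k\mapsto\peassign{k}$ being a bijection between causal kernels and co-causal cokernels with $\peassign{\peassign{k}}=k$, the uniqueness clauses follow at once: any pure co-causal effect $e$ with $e\circ\psi=\id{I}$ is a co-causal cokernel, so $\peassign{e}$ is the unique causal kernel with $e\circ\peassign{e}=\id{I}$, forcing $\psi=\peassign{e}$ and hence $e=\peassign{\psi}$; and any causal pure state $\phi$ with $\peassign{\psi}\circ\phi=\id{I}$ is itself a causal kernel with that property, so $\phi=\peassign{\peassign{\psi}}=\psi$. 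Since the two biconditionals are interchanged by the two halves of strong pure exclusion, it suffices to prove the one for states.

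For \textbf{pure $\Rightarrow$ kernel}, let $\psi\colon I\to A$ be causal and pure, put $c:=\coker(\psi)$ and $k:=\ker(c)$ (both existing by the first clause of split (co)kernels), and write $\psi=k\circ h$ for the unique (necessarily causal) $h\colon I\to A_1$ arising from $c\circ\psi=0$, where $A_1:=\dom(k)$. Let $\peassign{k}\colon A\to A_1$ be the co-causal cokernel with $\peassign{k}\circ k=\id{A_1}$, so that $h=\peassign{k}\circ\psi$. First, $h$ is pure: a dilation of $h$ extends along $k$ to a dilation of $\psi$, which purity of $\psi$ trivialises, and this trivialisation pulls back along $\peassign{k}$ to $h$. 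Next, $A_1$ is a trivial object --- otherwise strong pure exclusion, applied to the pure state $h$ of the non-trivial $A_1$, yields a non-zero effect $e_1$ with $e_1\circ h=0$; then $e_1\circ\peassign{k}$ annihilates $\psi$ and so factors through $c=\coker(\psi)$ as $e_1\circ\peassign{k}=m\circ c$, whence $e_1=e_1\circ\peassign{k}\circ k=m\circ c\circ k=0$, a contradiction. Finally a causal state of a trivial object is an isomorphism (immediate from the defining equation of a trivial object), so $h$ is an isomorphism and $\psi=k\circ h$ is a kernel of $c$; it is causal by hypothesis. For \textbf{kernel $\Rightarrow$ pure} (needed only so that the $\peassign{\psi}$ above is pure, via the dual) one argues similarly in the other direction using normalisation: if $k=\ker(f)$ and $\tilde k\colon I\to A\otimes E$ dilates it then $(f\otimes\id{E})\circ\tilde k$ discards to $f\circ k=0$ and hence vanishes (a state discarding to $0$ is $0$, by normalisation), so $\tilde k$ factors through $\ker(f\otimes\id{E})\cong k\otimes\id{E}$ and is of the form $\tilde k=k\otimes\sigma$.

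The main obstacle is the transfer of purity across the kernel $k$ in the first direction (and, dually, the argument that $\peassign{\psi}$ is pure): both depend on the precise form of the purity axiom~\eqref{eq:wholedef}, and the cleanest route is the dilation-theoretic one sketched above, pushing a dilation forward along $k$ and pulling the resulting factorisation back along the retraction $\peassign{k}$. A lesser point needing care is the identification $\ker(f\otimes\id{E})\cong k\otimes\id{E}$ in the second direction, which may deserve a separate short lemma; and the hypothesis $\discard{A}=0\implies\discardflip{A}=0\implies\id{A}=0$ enters only to dispatch degenerate objects --- in particular to keep $\id{I}\neq0$ so that the normalisation arguments are not vacuous. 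Everything else is routine manipulation of the universal properties of (co)kernels together with the monoidal structure, with the state/effect duality provided throughout by strong pure exclusion.
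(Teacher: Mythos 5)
Your first half---causal pure state $\Rightarrow$ kernel via the factorisation $\psi=\img(\psi)\circ h$, transferring purity along the splitting $\peassign{k}\circ k=\id{}$, and using pure exclusion to trivialise the image object---is exactly the paper's argument, and your reduction of the two uniqueness clauses of sharpness to the split (co)kernel structure is also as in the paper.

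The gap is in showing that $\peassign{\psi}$ is \emph{pure}. You deduce this from ``causal kernel state $\Rightarrow$ pure'' by passing to ``the formal dual'', but purity in the sense of~\eqref{eq:wholedef} is defined via dilations on \emph{outputs} and discarding, so it is not a self-dual notion: the formal dual of your statement only says that a co-causal cokernel effect satisfies the corresponding input-dilation (``co-purity'') property, not that it is pure in the sense sharpness requires. Moreover, your proof of ``kernel $\Rightarrow$ pure'' itself leans on $\ker(f\otimes\id{E})\cong\ker(f)\otimes\id{E}$, which belongs to the later \emph{kernel composition} axiom and is not among the hypotheses of this lemma, and it invokes normalisation, which is also not a stated hypothesis. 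The paper avoids both problems with a direct argument: from $\discard{A}=0\implies\id{A}=0$ and the factorisation of $f$ through $\img(f)=\ker(\coker(f))$ one first derives $\discard{}\circ f=0\implies f=0$ for \emph{all} morphisms $f$ (not just states); then any dilation $f$ of $c:=\peassign{\psi}$ satisfies $\discard{}\circ f\circ\ker(c)=c\circ\ker(c)=0$, hence $f\circ\ker(c)=0$, hence $f=\rho\circ c$ by the universal property of $c=\coker(\ker(c))$, with $\rho$ causal. This is where the hypothesis on zero discarding maps actually does its work---it is not merely there ``to dispatch degenerate objects'' or to keep $\id{I}\neq 0$. (Your other duality appeal, that every pure co-causal effect is a cokernel, is unproblematic: the image argument runs in parallel on effects using the effect half of strong pure exclusion, and purity in the sense of~\eqref{eq:wholedef} does transfer along the splitting in that direction.)
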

\begin{proof}
We first show that $\discard{} \circ f= 0 \implies f = 0$ for all morphisms $f$. Define $\img(f) := \ker(\coker(f))$. Then if $0 = \discard{} \circ f$ it follows that $0 = \discard{} \circ \img(f) = \discard{}$ and so $\img(f) = 0$. Since it factors over $\img(f)$, $f = 0$. 

We now show every causal pure state $\psi$ is a kernel. Let $i = \img(\psi) := \ker(\coker(\psi))$ with $i$ causal. Then $\psi = i \circ \phi$ for some $\phi$. Since $\peassign{i} \circ i = \id{}$ it is straightforward to check that $\phi$ is again pure and causal, and by the definition of $i$ that $f \circ \phi = 0 \implies f = 0$. Hence by pure exclusion the domain of $i$ is trivial, so that $\phi$ is a causal isomorphism, and $\psi = \img(\psi)$ as required.

Since our assumptions are self-dual, we dually have that every co-causal pure effect is a cokernel. Now for the sharpness axiom, by the split (co)kernels assumption, it suffices to check for each causal pure state $\psi$ that the co-causal cokernel effect $c := \peassign{\psi}$ is pure. Suppose that $\discard{} \circ f = c$. Then $\discard{} \circ f \circ \ker(c) = 0$ so $f \circ \ker(c) = 0$, and hence $f$ factors over $\coker(\ker(c)) = c$ since $c$ is a cokernel. Hence $f = \rho \circ c$ for some state $\rho$, which after applying $\discardflip{}$ we see is causal. Hence $c$ is indeed pure.
\end{proof}

As a result we may define an effect $\peassign{\psi}$ for each pure state $\psi$ as in~\eqref{eq:extended-state-dagger}. In this setting we can also capture our compositionality condition on pure states as follows. Say that $\catC$ has \emph{kernel composition} when (co)kernels are closed under $\otimes$, cokernels send pure states to pure states, and for all causal kernels $k$ and pure states $\psi$ we have 
\begin{equation} \label{eq:kernel-compact}
\peassign{ 
\left(
\tikzfig{ker-dag-rule}
\right)
}
\ \ = \ \ 
\tikzfig{ker-dag-rule-2}
\end{equation}

(Co)kernels are always closed under $\otimes$ in any compact category~\cite[Chap.~4]{thesis}, to which we refer along with~\cite{catreconstruction} for proof that all of these axioms hold in each theory $\Quant{S}$ for a phased field $S$.

\begin{lemma}
Suppose that $\catC$ has the properties of Lemma~\ref{lem:one} and kernel composition. Then $\catC$ satisfies the pure composition axiom.
\end{lemma}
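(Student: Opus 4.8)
The plan is to recognise the pure composition equation~\eqref{eq:pure-composition} as nothing more than the instance of the kernel composition rule~\eqref{eq:kernel-compact} in which the causal kernel is taken to be a causal pure state. The first ingredient I would assemble is the identification of the two assignments denoted $\peassign{\psi}$. Lemma~\ref{lem:one} tells us that every causal pure state $\psi$ is a causal kernel; by the split (co)kernels hypothesis there is then a co-causal cokernel $\peassign{k}$ of $k = \psi$ with $\peassign{k} \circ \psi = \id{I}$, and in the course of proving Lemma~\ref{lem:one} this effect is shown to be pure. Since the sharpness axiom (which holds, by Lemma~\ref{lem:one}) characterises $\peassign{\psi}$ as the \emph{unique} pure co-causal effect with $\peassign{\psi} \circ \psi = \id{I}$, we conclude that the sharpness effect $\peassign{\psi}$ \emph{is} the co-causal cokernel of $\psi$ viewed as a kernel. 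Moreover the extension~\eqref{eq:extended-state-dagger} shows that $\peassign{\chi}$ is pure for every pure state $\chi$, using that scalars are pure by Lemma~\ref{lem:normalisation-consequences}.

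Now let $\psi, \phi$ be causal pure states as in~\eqref{eq:pure-composition}, and view (say) $\psi$ as a causal kernel $k$, so that $\peassign{\psi} = \peassign{k}$. The state on the left-hand side of~\eqref{eq:pure-composition} then has the form $(\peassign{k} \otimes \id{})\circ \phi$ with $\phi$ pure; since cokernels are closed under $\otimes$ in the kernel composition hypothesis, and an identity is a cokernel of a zero morphism, the map $\peassign{k}\otimes\id{}$ is a cokernel, and cokernels send pure states to pure states, so this state is pure. Applying the state dagger to it is therefore legitimate, and by~\eqref{eq:kernel-compact} with kernel $k$ and pure state $\phi$ it yields exactly the effect on the right-hand side of~\eqref{eq:pure-composition}; since the left-hand side is $\peassign{(-)}$ of a pure state, this effect is pure as well. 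This gives both purity claims and the equation, so the pure composition axiom holds.

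The one place I would take real care is the combinatorial matching of the two rules: one must check that the state and effect written in~\eqref{eq:pure-composition} are, symbol for symbol, the corresponding diagrams of~\eqref{eq:kernel-compact} once the causal kernel there is specialised to a causal pure state and $\peassign{k}$ is rewritten as $\peassign{\psi}$ via the identification above (together with any identity wire tensored alongside). Everything else — purity of $\peassign{(-)}$ of a pure state, closure of cokernels under $\otimes$, and an identity being a cokernel — is immediate from the sharpness axiom, the kernel composition hypothesis, and the presence of zero morphisms, so no genuinely new argument is needed.
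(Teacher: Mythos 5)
Your overall route is the paper's: identify the sharpness effect $\peassign{\psi}$ of a causal pure state with the split co-causal cokernel of $\psi$ viewed as a kernel (via Lemma~\ref{lem:one} and the uniqueness clause of split (co)kernels), tensor with an identity wire to get the kernel $\psi \otimes \id{}$ and cokernel $\peassign{\psi}\otimes\id{}$ appearing in~\eqref{eq:pure-composition}, obtain purity of the state from ``cokernels send pure states to pure states'', and obtain the equation from~\eqref{eq:kernel-compact}. All of that matches the paper and is sound, including your care over the identification $\peassign{\psi\otimes\id{}}=\peassign{\psi}\otimes\id{}$ (which needs both closure of cokernels under $\otimes$ and the uniqueness of the splitting).

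The gap is in your final step, purity of the effect on the right-hand side of~\eqref{eq:pure-composition}. You reduce it to the claim that $\peassign{\chi}$ is pure for \emph{every} (not necessarily causal) pure state $\chi$, and justify that claim by saying that~\eqref{eq:extended-state-dagger} gives $\peassign{\chi}$ as a scalar $r$ composed with the pure effect $\peassign{\phi}$, ``using that scalars are pure by Lemma~\ref{lem:normalisation-consequences}''. But purity of $r$ and of $\peassign{\phi}$ separately does not yield purity of $r\circ\peassign{\phi}$: closure of pure morphisms under composition is not available at this stage --- it is only derived later, in the main theorem, using the compact structure whose construction depends on the very axiom you are proving here. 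The paper closes this hole with a short direct argument: given $f$ with $\discard{}\circ f = r\circ c$ where $c=\peassign{\phi}=\coker(\ker(c))$, one gets $f\circ\ker(c)=0$ (using $\discard{}\circ g=0\Rightarrow g=0$), hence $f=\rho\circ c$ for some state $\rho$; composing with $\phi$ gives $\discard{}\circ\rho=r$, so by normalisation $\rho=\sigma\circ r$ with $\sigma$ causal and $f=\sigma\circ(r\circ c)$, which is exactly purity of $r\circ c$. You need this (or an equivalent) argument; without it the purity of the effect in~\eqref{eq:pure-composition} is unproved.
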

\begin{proof}
By assumption and Lemma~\ref{lem:one} for any causal pure state $\phi$ and extra object the morphisms
\[
\tikzfig{psi-state}
\qquad \qquad
\tikzfig{psi-effect}
\]
are a kernel and cokernel respectively. Hence the state in~\eqref{eq:pure-composition} is an application of a cokernel to a kernel and so is pure. Moreover, the above morphisms are (co)causal respectively and are inverse to eachother, so that for any pure bipartite state $\psi$ we obtain~\eqref{eq:pure-composition}. Finally, we need that effect in~\eqref{eq:pure-composition} is pure. By assumption it suffices to show that $c = \peassign{\psi}$ is pure for any pure state $\psi$, or in other words that for any causal such state $\psi$ and scalar $r$ that $r \circ c$ is pure. The proof is almost identical to the proof that $c$ is pure in the previous result, noting in the last step we have $f= \rho \circ c$ and then applying $\discardflip{}$ and normalisation we get $\rho = \sigma \circ r$ for some causal state $\sigma$, so that $f= \sigma \circ r \circ c$, as required.
\end{proof}

\begin{proof}[Proof of Example~\ref{ex:pure-properties-example}]
Let $\catC$ satisfy the operational principles of~\cite{catreconstruction}, to which we refer for the following facts. Essentially unique purification holds by assumption, noting that every causal isomorphism is pure and hence co-causal by the CP axiom, and normalisation since all scalars are pure and scalar multiplication is cancellative. 

Next, the sharpness axiom holds with $\peassign{\psi}$ being given by $\psi^\dagger$. Indeed, in~\cite{catreconstruction} it is shown that each such $\psi$ is a dagger kernel, giving~\eqref{eq:state-dagger-as-property}. Moreover if $\phi \circ \psi = \id{I}$ for some co-causal pure effect $\phi$ then since $\catC$ has addition with $\phi + \discard{} \circ \ker(\phi) = \discard{}$ for some effect $e$, and addition is cancellative, we obtain $\discard{} \circ \ker(\phi) \circ \psi = 0$ so that $\ker(\phi) \circ \psi = 0$. Hence since $\phi$ is a cokernel we get that $\psi$ factors over $\phi^\dagger$ and so both states are equal by assumption. Dually we see that $\psi$ is unique with $\psi^\dagger \circ \psi = \id{I}$ also. The pure composition and pre-dual axioms then hold by dagger compactness of $\catC$ and $\catC_\pure$.

It remains to verify identity tomography. Let $W$ be an endomorphism of some object $A$ which preserves all pure states $\psi$, with a purification $U$. Then for any such state since $\psi$ is pure we obtain that $U \circ \psi = \psi \otimes u_\psi$ for some causal pure state $u_\psi$, which is straightforward to check must be pure also. Now in~\cite{catreconstruction} it is shown that we may view $\catC_\pure$ as the quotient category of a dagger compact category $\catA (:= \plusI{\catB})$ given by identifying a group of unitary scalars. So let $V \in \catA$ with $U=[V]$ in $\catC_\pure$. Then in $\catA$ we again have $V \circ \psi = \psi \otimes v_\psi$ for some state $v_\psi$, for all states $\psi$. Causality of $[v_\psi]$ in $\catC_\pure$ gives that $v_\psi$ is an isometry. We now show that $v_\psi$ does not depend on $\psi$. 

Since $\catA$ has an addition operation on morphisms we then have for all states $\psi, \phi$ that
\[
\psi \otimes v_\psi + \phi \otimes v_\phi
=
V \circ \psi + V \circ \phi 
=
V \circ (\psi + \phi)
=
(\psi + \phi) \otimes v_{\psi + \phi}
=
\psi \otimes v_\psi + \phi \otimes v_\phi
\]
Whenever $\psi$ and $\phi$ are orthonormal, so that $\psi^\dagger \circ \phi = 0$ and $\psi^\dagger \circ \psi = \id{I} = \phi^\dagger \circ \phi$, by composing with $\psi^\dagger$ and $\phi^\dagger$ it then follows that $v_\psi = v_{\psi + \phi} = v_\phi$. But more generally for we may write $\psi = \phi \circ z + \mu$ for a state $\mu$ which is orthogonal to $\phi$, where $z = \psi^\dagger \circ \phi$. It then follows that $v_\phi = v_\mu = v_\psi$. Since every state in $\catA$ is a multiple of an isometric one, the same holds for arbitrary states $\psi, \phi$.

Hence we have $V \circ \psi = \psi \otimes v$ for some fixed isometric state $v$. But since $\catA$ is well-pointed, this gives $V = \id{} \otimes v$. Then $[U] = \id{} \otimes [v]$ also, so that $W=\id{}$ as required.
\end{proof}

\end{document}